\theoremstyle{plain}
\newtheorem{theorem}{Theorem}
\newtheorem{corollary}[theorem]{Corollary}
\newtheorem{lemma}[theorem]{Lemma}
\newtheorem{claim}{Claim}
\theoremstyle{definition}
\newcommand{\ZZ}{{\mathbb{Z}}}
\newcommand{\RR}{{\mathbb{R}}}
\newcommand{\C}{\mathcal{C}}
\newcommand{\Mdeg}{\mathbf{M}_{\mathrm{in}}}
\newcommand{\Mdegi}[1]{\mathbf{M}_{\mathrm{in}}\sp{({#1})}}
\newcommand{\Idegi}[1]{\mathcal{I}_{\mathrm{in}}\sp{({#1})}}
\newcommand{\MVi}[1]{\mathbf{M}_{V}\sp{({#1})}}
\newcommand{\Mvi}[1]{\mathbf{M}_{v}\sp{({#1})}}
\newcommand{\IVi}[1]{\mathcal{I}_{V}\sp{({#1})}}
\newcommand{\Ivi}[1]{\mathcal{I}_{v}\sp{({#1})}}
\newcommand{\Vi}[1]{V\sp{({#1})}}
\newcommand{\Ai}[1]{A\sp{({#1})}}
\newcommand{\Di}[1]{D\sp{({#1})}}
\newcommand{\bi}[1]{b\sp{({#1})}}
\newcommand{\wi}[1]{w\sp{({#1})}}
\newcommand{\Fi}[1]{F\sp{({#1})}}
\newcommand{\Ideg}{\mathcal{I}_{\mathrm{in}}}
\newcommand{\Min}{\mathbf{M}_{\mathrm{in}}}
\newcommand{\Md}{\mathbf{M}_{\mathrm{in}}}
\newcommand{\Msp}{\mathbf{M}_{\mathrm{sp}}}
\newcommand{\Minv}{\mathbf{M}_v}
\newcommand{\MinV}{\mathbf{M}_V}
\newcommand{\IinV}{\mathcal{I}_V}
\newcommand{\Iinv}{\mathcal{I}_v}
\newcommand{\Isp}{\mathcal{I}_{\mathrm{sp}}}
\newcommand{\order}[1]{\mathrm{O}(#1)}
\newcommand{\X}{\mathcal{X}}
\newcommand{\bb}{$b$B}
\newcommand{\mrbb}{MR$b$B}
\title{The $b$-branching problem in digraphs}
\author{
Naonori Kakimura\thanks{Department of Mathematics, 
Faculty of Science and Technology, 
Keio University, 
Kanagawa 223-8522, Japan. 
\texttt{kakimura@math.keio.ac.jp}}
\and 
Naoyuki Kamiyama\thanks{Institute of Mathematics for Industry, Kyushu University, and JST, PRESTO, 
Fukuoka 819-0395, Japan.
  \texttt{kamiyama@imi.kyushu-u.ac.jp}}
\and 
Kenjiro Takazawa\thanks{Department of Industrial and Systems Engineering, Faculty of Science and Engineering, 
Hosei University, Tokyo 184-8584, Japan.  
{\tt takazawa@hosei.ac.jp}} 
}
\date{February, 2018}
\begin{document}

\maketitle

\begin{abstract}
In this paper, 
we introduce the concept of $b$-branchings in digraphs, which is 
a generalization of branchings 
serving as a counterpart of $b$-matchings. 
Here $b$ is a positive integer vector on the vertex set of a digraph, 
and 
a $b$-branching is defined as a common independent set of two matroids defined by $b$: 
an arc set is a $b$-branching if it has at most $b(v)$ arcs sharing the terminal vertex $v$, 
and  
it is an independent set of a certain sparsity matroid defined by $b$. 
We demonstrate that $b$-branchings yield an appropriate generalization of branchings 
by extending several classical results on branchings. 
We first present 
a multi-phase greedy algorithm for finding a maximum-weight $b$-branching. 
We then prove 
a packing theorem extending Edmonds' disjoint branchings theorem, 
and 
provide a strongly polynomial algorithm for finding optimal disjoint $b$-branchings. 
As a consequence of the packing theorem, 
we prove the integer decomposition property of the $b$-branching polytope. 
Finally, 
we deal with a further generalization 
in which 
a matroid constraint is imposed on the $b(v)$ arcs sharing the terminal vertex $v$.

\paragraph{Keywords:}
	Matroid intersection, 
	Sparsity matroid, 
	Algorithm, 
	Packing, 
	Integer decomposition property
\end{abstract}

\section{Introduction}
\label{SECintro}

Since the pioneering work of Edmonds \cite{Edm70,Edm79}, 
the importance of  
{\em matroid intersection} has been well appreciated. 
A special class of matroid intersection is 
\emph{branchings} (or \emph{arborescences}) in digraphs.  
Branchings have several good properties which do not hold for general matroid intersection. 
The objective of this paper is to propose a class of matroid intersection 
which generalizes branchings and inherits those good properties of branchings.

One of the good properties of branchings is that 
a maximum-weight branching can be found by a simple combinatorial algorithm \cite{Boc71,CL65,Edm67,Ful74}. 
This algorithm is 
much simpler than general weighted matroid intersection algorithms, 
and is 
referred to 
as a ``multi-phase greedy algorithm'' in the textbook by Kleinberg and Tardos \cite{KT05}. 

Another good property is the elegant theorem for packing disjoint branchings \cite{Edm73}. 
In terms of matroid intersection, 
this theorem says that, 
if there exist $k$ disjoint bases in each of the two matroids, 
then there exist $k$ disjoint common bases. 
This packing theorem leads to a proof that 
the branching polytope has the \emph{integer decomposition property} (defined in Section \ref{SECpre}).

In this paper, 
we propose \emph{$b$-branchings}, 
a class of matroid intersection 
generalizing branchings, 
while maintaining the above two good properties. 
This offers a new direction of fundamental extensions of the classical theorems on branchings. 

Let $D=(V,A)$ be a digraph 
and let $b\in \ZZ_{++}^V$ be a positive integer vector on $V$. 
For $v \in V$ and $F \subseteq A$, 
let $\delta^-_F(v)$ denote the set of arcs in $F$ entering $v$, 
and 
let $d\sp{-}_F(v) = |\delta^-_F(v)|$. 
One matroid $\Mdeg$ on $A$ has its independent set family $\Ideg$ defined by 
\begin{align}
\label{EQpartition}
&{}\Ideg = \{F \subseteq A \colon  \mbox{$d_F^-(v) \le b(v)$ for each $v \in V$}\}. 
\end{align}
That is, 
$\Mdeg$ is 
the direct sum of a uniform matroid on $\delta_A^-(v)$ of rank $b(v)$ for every $v \in V$. 
Hence, 
each vertex can have indegree at most $b(v)$, 
which can be more than one. 
Indeed, 
this is the reason why we refer to it as a $b$-branching,
as a counterpart of a $b$-matching.

In order to make $b$-branchings a satisfying generalization of branchings, 
the other matroid should be defined appropriately. 
Our answer is a \emph{sparsity matroid} 
determined by $D$ and $b$, 
which is defined as follows. 
For 
$F \subseteq A$ and 
$X\subseteq V$, 
let $F[X]$ denote the set of arcs in $F$ induced by $X$. 
Also, 
denote 
$\sum_{v \in X}b(v)$ by $b(X)$. 
Now define a matroid $\Msp$ on $A$  with independent set family $\Isp$ by 
\begin{align}
\label{EQsparsity}
&{}\Isp = \{F \subseteq A \colon \mbox{$|F[X]| \le b(X) - 1$ ($\emptyset \neq X \subseteq V$)}\}. 
\end{align}
It is known that $\Msp$ is a matroid \cite[Theorem 13.5.1]{Fra11}, 
referred to as a \emph{count matroid} or a \emph{sparsity matroid}.

Now we refer to an arc set $F \subseteq A$ as a \emph{b-branching} if $F \in \Ideg \cap \Isp$. 
It is clear that 
a branching is a special case of a $b$-branching where $b(v)=1$ for each $v\in V$. 
We demonstrate that 
$b$-branchings 
yield a reasonable generalization 
of branching 
by proving that the two fundamental results on branchings can be extended. 
That is, 
we present 
a multi-phase greedy algorithm for finding a maximum-weight $b$-branching, 
and 
a theorem for packing disjoint $b$-branchings.

Our multi-phase greedy algorithm is an extension of the maximum-weight branching algorithm \cite{Boc71,CL65,Edm67,Ful74}, 
and it has the following features. 
First, 
its running time is $\mathrm{O}(|V||A|)$, 
which is as fast as a simple implementation of the maximum-weight branching algorithm \cite{Boc71,CL65,Edm67,Ful74}, 
and 
faster than the current best general weighted matroid intersection algorithm. 
Second, 
our algorithm also finds an optimal dual solution, 
which is integer if the arc weights are integer. 
Thus, 
the algorithm constructively proves the total dual integrality of the associated linear inequality  system. 
Finally, 
the algorithm leads to a characterization of the existence of a $b$-branching with prescribed indegree, 
which is a generalization of 
that for an arborescence \cite{Boc71,Edm67,Ful74}. 

This characterization theorem is extended to 
a theorem on packing disjoint $b$-branchings. 
Let 
$k$ be a positive integer, 
and 
$b_1,\ldots, b_k$ be nonnegative integer vectors on $V$ such that $b_i(v)\le b(v)$ for each $v \in V$ and $b_i\neq b$ ($i =1,\ldots, k$). 
We provide a necessary and sufficient condition for $D$ to contain $k$ disjoint $b$-branchings $B_1,\ldots, B_k$ 
such that 
$d^-_{B_i}(v)=b_i(v)$ for every $v \in V$ and $i=1,\ldots, k$, 
which extends Edmonds' disjoint branching theorem \cite{Edm73}. 
We then show such disjoint $b$-branchings $B_1,\ldots, B_k$ can be found in strongly polynomial time. 
This strongly polynomial solvability is extended to finding disjoint $b$-branchings $B_1,\ldots, B_k$ 
that minimize $w(B_1)+ \cdots + w(B_k)$, when the arc-weight vector $w \in \RR_+^A$ is given. 
By utilizing our disjoint $b$-branchings theorem, 
we also prove the integer decomposition property of the $b$-branching polytope. 

We further deal with a generalized class of \emph{matroid-restricted $b$-branchings}. 
This is a class of matroid intersection 
in which $\Mdeg$ is the direct sum of an arbitrary matroid on $\delta_A^-(v)$ of rank $b(v)$ for all $v \in V$. 
Note that, 
in the class of $b$-branchings, 
the matroid $\Mdeg$ is the direct sum of a uniform matroid on $\delta_A^-(v)$ of rank $b(v)$. 
We show that our multi-phase greedy algorithm can be extended to this generalized class. 

\medskip

Let us conclude this section with describing related work.
The weighted matroid intersection problem is a common generalization of various combinatorial optimization problems such as bipartite matchings, packing spanning trees, and branchings~(or arborescences) in a digraph.
The problem has also been applied to various engineering problems, e.g., in electric circuit theory~\cite{M00,R89}, rigidity theory~\cite{R89}, and
network coding~\cite{DFZ11,HKM05}.
Since 1970s, quite a few algorithms have been proposed for matroid intersection problems, e.g., \cite{BCG86,Fra81,IT76,LSW2015,L70,L75}~(See \cite{HKK16} for further references).
However, all known algorithms are not greedy, but based on augmentation; repeatedly incrementing a current solution by exchanging some elements.

The matroids in branchings are a partition matroid and a graphic matroid, which are interconnected by a given digraph.
Such interconnection makes branchings more interesting.
As mentioned before, branchings have properties that matroid intersection of an arbitrary pair of a partition matroid and a graphic matroid does not have.
In particular, extending the packing theorem of branchings \cite{Edm73} is indeed a recent active topic.
Kamiyama, Katoh, and Takizawa \cite{KKT09} presented a fundamental extension
based on reachability in digraphs,
which is followed by a further extension based on
convexity in digraphs due to Fujishige \cite{Fuj10}.
Durand de Gevigney, Nguyen, and Szigeti \cite{DNS13} proved a theorem for
packing arborescences with matroid constraints.
Kir{\' a}ly \cite{Kir16} generalized the result of \cite{DNS13} in the
same direction of \cite{KKT09}.
A matroid-restricted packing of arborescences \cite{BK16,Fra09} is another generalization concerning a matroid constraint. 
We remark that our packing and matroid restriction for $b$-branchings differ from 
the above matroidal extensions of packing of arborescences.

\medskip

The organization of this paper is as follows. 
In Section \ref{SECpre}, 
we review the literature of branchings and matroid intersection, 
including  
algorithmic, polyhedral, and packing results. 
In Section \ref{SECalgo}, 
we present a multi-phase greedy algorithm for finding a 
maximum-weight $b$-branching. 
Section \ref{SECpacking} is devoted to 
proving a theorem on packing disjoint $b$-branchings. 
In Section \ref{SECmatroid}, 
we extend the multi-phase greedy algorithm to matroid-restricted $b$-branchings. 
In Section \ref{SECconcl}, 
we conclude this paper with a couple of remarks. 

\section{Preliminaries}
\label{SECpre}

In this section, 
we review fundamental results on branchings and related theory of matroid intersection and polyhedral combinatorics. 
For more details, 
the readers are referred to \cite{Kam14,KV12,Sch03}.

In a digraph $D=(V,A)$, 
an arc subset $B \subseteq A$ is a \emph{branching} 
if, 
in the subgraph $(V,B)$, 
the indegree of every vertex is at most one 
and 
there does not exist a cycle. 
In terms of matroid intersection, 
a branching is a common independent set 
of a partition matroid and a graphic matroid, 
i.e.,\ 
intersection of 
\begin{align}
\label{EQpartition1}
&{}\{F \subseteq A \colon  \mbox{$d^-_F(v) \le 1$ for each $v \in V$}\}, \\
\label{EQgraphic}
&{}\{F \subseteq A \colon \mbox{$|F[X]| \le |X| - 1$ ($\emptyset \neq X \subseteq V$)}\}. 
\end{align}
Recall that a branching is a special case of a $b$-branching where $b(v)=1$ for each $v\in V$. 
Indeed, 
by putting $b(v)=1$ for each $v\in V$ in \eqref{EQpartition} and \eqref{EQsparsity}, 
we obtain \eqref{EQpartition1} and \eqref{EQgraphic}, 
respectively. 

As stated in Section \ref{SECintro}, 
a maximum-weight branching can be found by a multi-phase greedy algorithm \cite{Boc71,CL65,Edm67,Ful74}, 
which appears in standard textbooks such as \cite{KT05,KV12,Sch03}. 
To the best of our knowledge, 
we have no other nontrivial special case of matroid intersection 
which can be solved greedily. 
For example, 
intersection of two partition matroids is equivalent to bipartite matching. 
This seems the simplest nontrivial example of matroid intersection, 
but 
we do not know a greedy algorithm for finding a maximum bipartite matching.

Another important result on branchings is the disjoint branchings theorem by Edmonds \cite{Edm73}, 
described as follows. 
For a positive integer $k$, 
the set of integers  
$\{1,\ldots, k\}$ is denoted by $[k]$.  
For $F \subseteq A$ and $X \subseteq V$, 
let $\delta^-_F(X) \subseteq A$ denote the set of arcs in $F$ from $V \setminus X$ to $X$, 
and let $d^-_F(X) = |\delta^-_F(X)|$. 

\begin{theorem}[Edmonds \cite{Edm73}]
\label{THMbpacking}
Let $D=(V,A)$ be a digraph and $k$ be a positive integer, 
and $U_1,\ldots, U_k$ be subsets of $V$. 
Then, 
there exist disjoint branchings $B_1,\ldots, B_k$ such that 
$U_i = \{v \in V \colon d^-_{B_i}(v)=1\}$ for each $i \in [k]$ if and only if 
\begin{align}
\notag
d^-_A(X) \ge |\{ i \in [k] \colon X \subseteq U_i\}| \quad (\emptyset \neq X \subseteq V). 
\end{align}
\end{theorem}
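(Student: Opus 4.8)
\emph{Necessity} is immediate. Suppose disjoint branchings $B_1,\ldots,B_k$ with $U_i=\{v\in V:d^-_{B_i}(v)=1\}$ are given, and let $\emptyset\neq X\subseteq V$. For each $i$ with $X\subseteq U_i$, the digraph $(X,B_i[X])$ is acyclic, hence has a vertex $v_0$ with $d^-_{B_i[X]}(v_0)=0$; since $v_0\in X\subseteq U_i$ we have $d^-_{B_i}(v_0)=1$, so the unique arc of $B_i$ entering $v_0$ has its tail in $V\setminus X$ and therefore lies in $\delta^-_{B_i}(X)$, giving $d^-_{B_i}(X)\geq 1$. As the $B_i$ are pairwise disjoint subsets of $A$, summing over the relevant indices yields $d^-_A(X)\geq\sum_{i:X\subseteq U_i}d^-_{B_i}(X)\geq|\{i\in[k]:X\subseteq U_i\}|$.

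For \emph{sufficiency} I would argue by induction on $\sum_{i\in[k]}|U_i|$. If every $U_i$ is empty, the empty branchings serve. Otherwise relabel so that $U_k\neq\emptyset$ and detach a single arc into $B_k$: choose $v\in U_k$ and an arc $a=(u,v)\in\delta^-_A(v)$ so that the reduced instance $(V,A\setminus\{a\}\,;\,U_1,\ldots,U_{k-1},U_k\setminus\{v\})$ still satisfies the hypothesis, apply induction to get disjoint branchings $B_1,\ldots,B_{k-1}$ and $B_k'$ with $\{w:d^-_{B_k'}(w)=1\}=U_k\setminus\{v\}$, and set $B_k:=B_k'+a$. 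The tool for choosing $a$ is that the deficiency function $g(X):=d^-_A(X)-|\{i\in[k]:X\subseteq U_i\}|$ is submodular on the nonempty subsets of $V$: $d^-_A$ is submodular, and for each fixed $i$ the set function that is $1$ on sets contained in $U_i$ and $0$ elsewhere is supermodular, so their sum is. Consequently the family of \emph{tight} sets $\{X\neq\emptyset:g(X)=0\}$ is closed under union and intersection of pairs that meet. A short case analysis then shows that deleting $a=(u,v)$ and shrinking $U_k$ can endanger the inequality only at tight sets $X$ with $v\in X$, $u\notin X$ and $X\not\subseteq U_k$; since these all contain $v$ they uncross to a single tight set, and it suffices to take $a$ with $u$ inside it — or to note that no such set exists. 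That an arc into $v$ with the required tail actually exists must be teased out of $d^-_A(\{v\})\geq|\{i:v\in U_i\}|\geq 1$ together with this tightness bookkeeping.

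The step I expect to be the real obstacle is pushing the arc selection through \emph{simultaneously} with acyclicity: $B_k'+a$ is a branching only if $v$ does not reach $u$ inside $B_k'$, and the inductive hypothesis says nothing about the internal routing of $B_k'$. One therefore has to choose the detached vertex $v\in U_k$ carefully — a poor choice can make the intersection of the relevant tight sets collapse to $\{v\}$, leaving no admissible tail — and then argue that a cycle, should one appear, can be repaired by rerouting (swapping the first arc of the cycle out of $v$ and chasing the resulting missing indegree). This is exactly where the interlock between the indegree (partition) matroid and the sparsity (graphic/acyclicity) matroid does work that a generic intersection of a partition matroid and a graphic matroid cannot. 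A cleaner way to quarantine the acyclicity issue is to reduce to a rooted version — adjoin a universal root $r$ and pass to packing spanning $r$-arborescences — and then run the classical arc-by-arc growth argument, in which every newly added arc leaves the current partial arborescence, so acyclicity holds for free and only a cut inequality of the form $d^-(X)\geq k$ (appropriately adjusted for the partial solution) needs to be maintained.
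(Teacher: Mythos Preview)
Your necessity is fine, and your inductive skeleton for sufficiency---detach one arc into $B_k$, shrink $U_k$, recurse, using submodularity of the deficiency to uncross tight sets---matches the Lov\'asz argument that the paper follows in proving its generalization (Theorem~\ref{THMbbpacking}; the paper does not prove Theorem~\ref{THMbpacking} separately). But you have correctly flagged, and not resolved, the crux: why $B_k'+a$ is acyclic. Your suggested workarounds (post-hoc rerouting of a cycle, or reducing to a rooted version and rerunning ``the classical argument'') are gestured at but not carried out, so the proposal as it stands has a genuine gap.

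The idea you are missing handles both the existence of a good arc and acyclicity in one stroke. Do not start by picking $v\in U_k$ and then hunting for a tail; instead pick an inclusionwise \emph{minimal} tight set $W$ that meets both $U_k$ and $V\setminus U_k$ (the set $W=V$ witnesses existence, since $U_k\neq\emptyset$ and $U_k\neq V$). One then shows---this is Claim~\ref{CLa} in the paper, specialized to $b\equiv 1$---that some arc $a=(u,v)\in A[W]$ has $u\in W\setminus U_k$ and $v\in W\cap U_k$; your uncrossing argument, run against the minimality of $W$, then shows that deleting $a$ and replacing $U_k$ by $U_k\setminus\{v\}$ preserves the hypothesis. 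The decisive point is that $u\notin U_k$: in the recursively obtained branching $B_k'$ covering $U_k\setminus\{v\}$, the vertex $u$ has indegree $0$, so no directed path in $B_k'$ can end at $u$, and $B_k'\cup\{a\}$ is automatically acyclic. This single requirement ``tail outside $U_k$'' is exactly what dissolves both of your flagged obstacles.
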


From Theorem \ref{THMbpacking}, 
we obtain a theorem on covering a digraph by branchings \cite{Fra79,MG86}.

\begin{theorem}[\cite{Fra79,MG86}]
\label{THMbcovering}
Let $D=(V,A)$ be a digraph and let $k$ be a nonnegative integer. 
Then, 
the arc set $A$ can be covered by $k$ branchings if and only if 
\begin{align*}
&{}d^-_A(v) \le k \quad (v \in V), \\
&{}|A[X]| \le k(|X|-1) \quad (\emptyset \neq X \subseteq V). 
\end{align*}
\end{theorem}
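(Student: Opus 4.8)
The plan is to derive Theorem~\ref{THMbcovering} from Edmonds' disjoint branchings theorem (Theorem~\ref{THMbpacking}) by reducing ``covering $A$ by $k$ branchings'' to ``packing $k$ disjoint branchings that jointly use every arc.'' Concretely, $A$ can be covered by $k$ branchings if and only if it can be \emph{partitioned} into $k$ branchings $B_1,\ldots,B_k$ (an arc used by two branchings may simply be deleted from all but one), and this in turn happens exactly when there are disjoint branchings $B_1,\ldots,B_k$ with $B_1\cup\cdots\cup B_k=A$. So I want to choose target vertex sets $U_1,\ldots,U_k\subseteq V$ that ``add up'' to the indegree vector of $A$, apply Theorem~\ref{THMbpacking}, and check that the resulting condition on the $U_i$ collapses to the two stated inequalities.

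First I would prove necessity. If $B_1,\ldots,B_k$ partition $A$, then for every $v\in V$ we have $d^-_A(v)=\sum_{i\in[k]} d^-_{B_i}(v)\le k$ since each $d^-_{B_i}(v)\le 1$; and for every nonempty $X\subseteq V$, $A[X]=\bigcup_i B_i[X]$ is a disjoint union with each $|B_i[X]|\le|X|-1$ (as $B_i$ restricted to $X$ is a branching on $X$), giving $|A[X]|\le k(|X|-1)$.

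For sufficiency, assume the two inequalities hold. Since $d^-_A(v)\le k$ for all $v$, I can pick sets $U_1,\ldots,U_k\subseteq V$ with the property that each $v\in V$ lies in exactly $d^-_A(v)$ of them; the cleanest way is to assign, at each vertex $v$, its $d^-_A(v)$ entering arcs to $d^-_A(v)$ distinct indices, so that arc $a$ entering $v$ is slated for branching $B_{\sigma(a)}$ and $U_i=\{v: i\in\sigma(\delta^-_A(v))\}$. Then for nonempty $X\subseteq V$, $|\{i\in[k]: X\subseteq U_i\}|$ counts indices $i$ such that \emph{every} vertex of $X$ has an entering arc assigned to $i$; I must show this is at most $d^-_A(X)$. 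Here is where the counting inequality $|A[X]|\le k(|X|-1)$ enters: the arcs entering a vertex $v\in X$ split into those from inside $X$ (at most contributing to $A[X]$) and those in $\delta^-_A(X)$, and a crude count shows the number of ``colors'' present at every vertex of $X$ cannot exceed $d^-_A(X)$ unless $|A[X]|$ exceeds $k(|X|-1)$. Once $d^-_A(X)\ge|\{i: X\subseteq U_i\}|$ is verified, Theorem~\ref{THMbpacking} yields disjoint branchings $B_1,\ldots,B_k$ with $\{v: d^-_{B_i}(v)=1\}=U_i$; by construction $\sum_i d^-_{B_i}(v)=|U_i\ni v|=d^-_A(v)$, so $\bigcup_i B_i$ has the same indegree sequence as $A$ and, being a subset of $A$ with matching indegrees, equals $A$. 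Thus $A$ is covered (indeed partitioned) by $k$ branchings.

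The main obstacle is the sufficiency direction, specifically choosing the $U_i$ so that Edmonds' condition $d^-_A(X)\ge|\{i: X\subseteq U_i\}|$ holds for \emph{all} nonempty $X$ simultaneously. A naive arc-to-color assignment vertex by vertex need not work for larger sets $X$, so I expect to need a more careful, possibly global, construction of the coloring---or alternatively an argument by contradiction that takes a violating $X$ and extracts from it a subset contradicting $|A[X]|\le k(|X|-1)$. An appealing alternative that sidesteps choosing $U_i$ altogether is to invoke matroid union directly: the condition $|A[X]|\le k(|X|-1)$ for all $X$ is exactly the rank condition for the graphic (sparsity, with $b\equiv1$) matroid of $D$ to be coverable by $k$ independent sets (Nash--Williams), the condition $d^-_A(v)\le k$ is the analogous condition for the partition matroid $\Mdeg$ (with $b\equiv1$); but covering by $k$ common independent sets is not implied by covering by $k$ independent sets in each matroid in general, so this shortcut does not obviously close---which is precisely why the reduction through Theorem~\ref{THMbpacking} is the right route, and why pinning down the $U_i$ is the crux.
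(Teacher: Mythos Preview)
Your high-level plan---reduce ``cover $A$ by $k$ branchings'' to Edmonds' packing theorem (Theorem~\ref{THMbpacking})---matches the paper's route. The necessity argument is fine, and so is the final step: once you have disjoint branchings $B_1,\ldots,B_k\subseteq A$ with $\sum_i d^-_{B_i}(v)=d^-_A(v)$ for all $v$, an indegree count forces $\bigcup_i B_i=A$.

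The genuine gap is exactly the one you flag yourself: you never establish that sets $U_1,\ldots,U_k$ with $|\{i:v\in U_i\}|=d^-_A(v)$ \emph{and} $d^-_A(X)\ge|\{i:X\subseteq U_i\}|$ for all nonempty $X$ exist. Your vertex-by-vertex color assignment does not guarantee the second condition, the ``crude count'' you sketch is not carried out, and the contradiction/matroid-union alternatives you mention are left open. As written, sufficiency is not proved.

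The paper sidesteps this difficulty entirely (see the proof of Corollary~\ref{CORbbcover}, which specializes to $b\equiv 1$ to give Theorem~\ref{THMbcovering}). Instead of guessing $U_i$, it adds a new root $r$ and $k-d^-_A(v)$ parallel arcs $r\to v$ for each $v\in V$, producing $D'=(V',A')$ with $d^-_{A'}(v)=k$ for all $v\in V$. Then one simply takes $U_i=V$ for every $i$. Edmonds' condition becomes, for nonempty $X\subseteq V$,
\[
d^-_{A'}(X)=\sum_{v\in X}d^-_{A'}(v)-|A'[X]|=k|X|-|A[X]|\ge k|X|-k(|X|-1)=k=|\{i:X\subseteq U_i\}|,
\]
while any $X$ containing $r$ has $|\{i:X\subseteq U_i\}|=0$. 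Theorem~\ref{THMbpacking} then yields $k$ disjoint branchings in $D'$, each of size $|V|$; since $|A'|=k|V|$ they partition $A'$, and restricting to $A$ gives the cover. The root-vertex trick is the missing idea: it makes all $U_i$ equal, so the hypothesis $|A[X]|\le k(|X|-1)$ translates directly into Edmonds' cut condition without any delicate choice of colors.
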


Theorem \ref{THMbcovering} leads to the \emph{integer decomposition property} of the branching polytope. 
The \emph{branching polytope} is a convex hull of the charactiristic vectors of all branchings. 
It follows from the total dual integrality of matroid intersection \cite{Edm70} that 
the branching polytope is determined by the following linear system: 
\begin{alignat}{2}
\label{EQmi1}
&{}x(\delta^-(v)) \le 1 \quad {}&&{}(v \in V), \\
\label{EQmi2}
&{}x(A[X]) \le |X|-1 \quad    {}&&{}(\emptyset \neq X \subseteq V), \\
\label{EQmi3}
&{} x(a) \ge 0               {}&&{}(a\in A). 
\end{alignat}
\begin{theorem}[see \cite{Sch03}]
The linear system \eqref{EQmi1}--\eqref{EQmi3} is totally dual integral. 
\end{theorem}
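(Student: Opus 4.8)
The plan is to derive this from Edmonds' theorem on the total dual integrality of the matroid intersection system, already invoked above to describe the branching polytope. Let $M_1=(A,r_1)$ be the partition matroid \eqref{EQpartition1} and $M_2=(A,r_2)$ the graphic matroid \eqref{EQgraphic}, both on the common ground set $A$, so that their common independent sets are exactly the branchings. By \cite{Edm70} (see also \cite{Sch03}), the system consisting of $x(U)\le r_1(U)$ and $x(U)\le r_2(U)$ for all $U\subseteq A$ together with $x(a)\ge 0$ ($a\in A$) is totally dual integral. The matroid-intersection inequalities for $U=\delta^-(v)$ and $U=A[X]$ already imply \eqref{EQmi1} and \eqref{EQmi2}, since $r_1(\delta^-(v))\le 1$ and $r_2(A[X])\le|X|-1$; conversely every inequality $x(U)\le r_i(U)$ follows from \eqref{EQmi1}--\eqref{EQmi3}, obtained by splitting $U$ according to the heads of its arcs, respectively according to the vertex sets of the connected components of $(V(U),U)$, and using $x\ge 0$. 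Hence the two systems describe the same polyhedron, and it remains only to transfer total dual integrality to the sparser system \eqref{EQmi1}--\eqref{EQmi3}.

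To do this, I would fix an integral vector $w\in\ZZ^A$ and take an integral optimal solution to the dual of the problem of maximizing $w^{\top}x$ over the full matroid-intersection system, with multiplier $p_U\ge 0$ on $x(U)\le r_1(U)$ and $q_U\ge 0$ on $x(U)\le r_2(U)$ (the multipliers of the inequalities $x(a)\ge 0$ will not enter the construction). From it I would build a dual solution for the problem of maximizing $w^{\top}x$ subject to \eqref{EQmi1}--\eqref{EQmi3} by putting $y_v$ equal to the sum of $p_U$ over all $U$ with $\delta^-(v)\cap U\ne\emptyset$, for $v\in V$, and $z_X$ equal to the sum of $q_U$ over all $U$ for which $X$ is the vertex set of a connected component of $(V(U),U)$, for $\emptyset\ne X\subseteq V$. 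This $(y,z)$ is integral and nonnegative. It is dual-feasible, since any $U$ containing an arc $a=(u,v)$ contributes $p_U$ to $y_v$ and contributes $q_U$ to $z_X$ for the $X$ that is the vertex set of the component of $(V(U),U)$ containing $a$, whence $y_v+\sum_{X\,:\,\{u,v\}\subseteq X}z_X\ge w(a)$. And its objective $\sum_{v\in V}y_v+\sum_X(|X|-1)z_X$ equals $\sum_U\bigl(r_1(U)p_U+r_2(U)q_U\bigr)$, because $r_1(U)=|\{v\in V\,:\,\delta^-(v)\cap U\ne\emptyset\}|$ and $r_2(U)$ equals the sum of $|X_j|-1$ over the connected components of $(V(U),U)$, where $X_j$ denote their vertex sets. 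Since the full and the reduced primal problems share a feasible region, and hence the same optimal value, $(y,z)$ is an integral optimal dual solution for \eqref{EQmi1}--\eqref{EQmi3}, which proves the theorem.

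I do not anticipate a genuine obstacle; the argument is essentially bookkeeping. The one point needing care is the verification that the substitution above simultaneously preserves dual feasibility and the dual objective, and this works precisely because $r_1$ and $r_2$ decompose along, respectively, the heads of arcs and the connected components of induced subgraphs, which are exactly the decompositions built into \eqref{EQmi1} and \eqref{EQmi2}. As an alternative, and more in keeping with the rest of the paper, one may prove total dual integrality constructively: the multi-phase greedy maximum-weight branching algorithm \cite{Boc71,CL65,Edm67,Ful74} returns, together with a maximum-weight branching, an integral optimal solution to the dual of the problem of maximizing $w^{\top}x$ subject to \eqref{EQmi1}--\eqref{EQmi3} whenever $w$ is integral; this is the template that Section~\ref{SECalgo} generalizes to $b$-branchings.
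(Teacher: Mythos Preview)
The paper does not give its own proof of this theorem; it is quoted from \cite{Sch03} as background, so there is no in-paper argument to compare against. Your proof is correct. The reduction to Edmonds' TDI theorem for matroid intersection, followed by the explicit aggregation of dual multipliers---collecting the $p_U$ by heads and the $q_U$ by connected components---is a standard and valid way to transfer total dual integrality from the full rank-function system to the compact system \eqref{EQmi1}--\eqref{EQmi3}. The identities $r_1(U)=|\{v\in V:\delta^-(v)\cap U\neq\emptyset\}|$ and $r_2(U)=\sum_j(|X_j|-1)$, where the $X_j$ are the vertex sets of the components of $(V(U),U)$, are exactly what make dual feasibility and the objective identity hold simultaneously, and you have identified this correctly as the only point needing care.

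Your closing remark---that one can alternatively prove TDI constructively via the multi-phase greedy algorithm, which returns an integral optimal dual whenever $w$ is integral---is precisely the viewpoint the paper adopts for the $b$-branching generalization in Section~\ref{SECalgo}; see the discussion surrounding \eqref{EQdual0}--\eqref{EQdual2} and Theorem~\ref{THMpolytope}. So while the paper does not prove the branching case at all, your second proposed route is the template it follows for $b$-branchings.
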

\begin{corollary}[see \cite{Sch03}]
\label{CORbTDI}
The linear system \eqref{EQmi1}--\eqref{EQmi3} determines the branching polytope. 
\end{corollary}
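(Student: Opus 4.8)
The plan is to derive the corollary from the preceding total dual integrality theorem together with an explicit description of the integral points of the polyhedron cut out by \eqref{EQmi1}--\eqref{EQmi3}. Write $P$ for this polyhedron. Every arc $a\in A$ enters a unique vertex, so \eqref{EQmi3} and \eqref{EQmi1} force $0\le x(a)\le x(\delta^-(v))\le 1$; hence $P\subseteq[0,1]^A$ is bounded, i.e.\ a polytope. Since the right-hand sides of \eqref{EQmi1}--\eqref{EQmi3} are all integral and, by the preceding theorem, the system is totally dual integral, the classical Edmonds--Giles theorem (see \cite{Sch03}) gives that $P$ is an integral polytope, and therefore $P$ is the convex hull of $P\cap\ZZ^A$.

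Next I would identify $P\cap\ZZ^A$ with the set of characteristic vectors of branchings. Let $x\in P\cap\ZZ^A$. By the displayed inequality above, $x\in\{0,1\}^A$, so $x=\chi^F$ for some $F\subseteq A$. Then \eqref{EQmi1} states precisely that $d^-_F(v)\le 1$ for every $v\in V$, i.e.\ $F$ is independent in the partition matroid \eqref{EQpartition1}, and \eqref{EQmi2} states precisely that $|F[X]|\le |X|-1$ for every nonempty $X\subseteq V$, i.e.\ $F$ is independent in the graphic matroid \eqref{EQgraphic}; equivalently, $(V,F)$ has indegree at most one at every vertex and contains no cycle, so $F$ is a branching. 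Conversely, for any branching $B$, nonnegativity of $\chi^B$ is trivial, the indegree bound gives \eqref{EQmi1}, and the fact that $(V,B)$ is a forest gives $|B[X]|\le|X|-1$ for every nonempty $X$, hence \eqref{EQmi2}. Thus $P\cap\ZZ^A$ is exactly the set of characteristic vectors of branchings.

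Combining the two points, $P=\mathrm{conv}(P\cap\ZZ^A)$ equals the convex hull of the characteristic vectors of all branchings, which is the branching polytope, proving the corollary. The only step that is not a routine unwinding of definitions is the appeal to ``totally dual integral system with integral right-hand side $\Rightarrow$ integral polyhedron,'' and since this is the standard Edmonds--Giles theorem already implicit in the discussion preceding the statement, there is no real obstacle: the corollary is essentially immediate once the total dual integrality theorem is available.
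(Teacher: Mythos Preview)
Your proof is correct and follows exactly the approach the paper intends: the paper does not spell out a proof of the corollary but simply records it as a consequence of the preceding total dual integrality theorem (citing \cite{Sch03}), and you have filled in the standard details---boundedness of $P$, integrality via Edmonds--Giles, and identification of the $0/1$ points with characteristic vectors of branchings. There is nothing to add.
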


For a polytope $P$ and a positive integer $k$, 
define $kP=\{ x \colon \exists x' \in P, x=kx'\}$. 
A polytope $P$ has the \emph{integer decomposition property} 
if, 
for each positive integer $k$, 
any integer vector $x \in kP$ can be represented as the sum of $k$ integer vectors in $P$. 
The integer decomposition property of the branching polytope is a direct consequence of Theorem \ref{THMbcovering} and 
Corollary \ref{CORbTDI}. 
\begin{corollary}[\cite{BT81}]
The branching polytope has the integer decomposition property. 
\end{corollary}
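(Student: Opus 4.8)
The plan is to derive the integer decomposition property of the branching polytope directly from Theorem~\ref{THMbcovering} together with Corollary~\ref{CORbTDI}, exactly as in the classical argument. Fix a positive integer $k$ and an integer vector $x \in kP$, where $P$ denotes the branching polytope. We must write $x$ as a sum of $k$ integer vectors lying in $P$, i.e., as a sum of characteristic vectors of $k$ branchings. Since $x \in kP$, the vector $x/k$ lies in $P$, and by Corollary~\ref{CORbTDI} it satisfies the system \eqref{EQmi1}--\eqref{EQmi3}; multiplying through by $k$, the integer vector $x$ satisfies $x(\delta^-(v)) \le k$ for all $v \in V$, $x(A[X]) \le k(|X|-1)$ for all $\emptyset \neq X \subseteq V$, and $x \ge 0$.

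The key step is to pass from the vector $x$ to a multiset (or multigraph) of arcs. Since $x$ is a nonnegative integer vector on $A$, interpret $x$ as the arc-multiset of a digraph $D' = (V, A')$ in which each arc $a \in A$ is taken with multiplicity $x(a)$; formally one replaces each $a$ by $x(a)$ parallel copies. The inequalities just derived say precisely that in $D'$ every vertex has indegree at most $k$ and every vertex subset $X$ induces at most $k(|X|-1)$ arcs. These are exactly the hypotheses of Theorem~\ref{THMbcovering} applied to $D'$ (note that parallel arcs do not affect the validity of that theorem, since its conditions are stated purely in terms of indegrees $d^-_{A'}(v)$ and induced-arc counts $|A'[X]|$). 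Hence Theorem~\ref{THMbcovering} yields branchings $B_1, \ldots, B_k$ in $D'$ that cover $A'$; since a branching has indegree at most one at each vertex and no cycle, each $B_i$ projects to a branching in the original digraph $D$, and its characteristic vector lies in $P$.

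Finally, ``$B_1,\ldots,B_k$ cover $A'$'' means every arc copy of $D'$ lies in at least one $B_i$; because the total number of arc copies is $\sum_{a} x(a)$ and each $B_i$ is a subset of $A'$, a covering of a multiset by $k$ sets that are themselves submultisets forces each arc $a$ to be covered \emph{exactly} $x(a)$ times once we observe the count bound $\sum_i |B_i| \le \sum_i$(number of arc copies) — more carefully, one should take the $B_i$ to be disjoint submultisets whose union is all of $A'$, which is what the covering theorem provides when we treat $A'$ as a multiset and each copy separately. Then $\sum_{i=1}^k \chi^{B_i} = x$, expressing $x$ as the sum of $k$ integer points of $P$, which is the integer decomposition property.

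The main obstacle, and the only point requiring care, is the passage through multigraphs: one must check that Theorem~\ref{THMbcovering} applies verbatim to the multigraph $D'$ (it does, since its statement references only $d^-_A(v)$ and $|A[X]|$, both well-defined for multigraphs) and that ``cover by $k$ branchings'' can be taken to mean a partition of the arc-multiset into $k$ branchings, so that the multiplicities add up to exactly $x(a)$ rather than merely at least $x(a)$. Alternatively, one can sidestep parallel arcs entirely by invoking the standard equivalence between the integer decomposition property and the existence of such a covering/partition in the polyhedral language, but the multigraph route is the most transparent; either way this is routine and the corollary follows.
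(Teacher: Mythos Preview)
Your proof is correct and follows exactly the approach the paper indicates: the paper simply states that the corollary ``is a direct consequence of Theorem~\ref{THMbcovering} and Corollary~\ref{CORbTDI},'' and you have filled in precisely those details (the same argument is spelled out in full for the $b$-branching case later in the paper). The only slightly labored passage is your discussion of covering versus partition; it suffices to note that any subset of a branching is again a branching, so a covering by $k$ branchings can always be thinned to a partition into $k$ branchings, after which the characteristic vectors sum to $x$.
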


We remark that 
the integer decomposition property does not hold for an arbitrary matroid intersection polytope. 
Schrijver \cite{Sch03} presents an example of matroid intersection defined on the edge set of $K_4$ without integer decomposition property. 
Indeed, 
finding a class of polyhedra with integer decomposition property is a 
classical topic in combinatorics. 
Typical examples of polyhedra with integer decomposition property include 
polymatroids \cite{BT81,Gil75}, 
the branching polytope \cite{BT81}, 
and 
intersection of two strongly base orderable matroids \cite{DM76,McD76}. 
While there is some recent progress \cite{Ben17}, 
the integer decomposition property of polyhedra is far from being well understood. 
In Section \ref{SECpacking}, 
we will prove that the $b$-branching polytope is 
a new example of polytopes with 
integer decomposition property.

\section{Multi-phase greedy algorithm}
\label{SECalgo}

In this section, 
we present a multi-phase greedy algorithm 
for finding a maximum-weight $b$-branching 
by extending that for branchings \cite{Boc71,CL65,Edm67,Ful74}.

\subsection{Key lemma}
\label{SECkeylemma}
Let $D=(V,A)$ be a digraph and $b \in \ZZ_{++}^V$ be a positive integer vector on $V$. 
Recall that an arc set $F \subseteq A$ is a $b$-branching 
if $F \in \Ideg \cap \Isp$, 
where $\Ideg$ and $\Isp$ are defined by \eqref{EQpartition} 
and \eqref{EQsparsity}, 
respectively. 

We first show a key property of $\Md$ and $\Msp$, 
which plays an important role in our algorithm. 

\begin{lemma}
\label{LEMcircuit}
An independent set $F$ in $\Mdeg$ is not independent in $\Msp$ 
if and only if $(V,F)$ has a strong component $X$ such that 
\begin{align}
\label{EQstrong}
|F[X]| = b(X). 
\end{align}
Moreover, 
such $F[X]$ is a circuit of $\Msp$.
\end{lemma}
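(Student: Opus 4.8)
The plan is to prove the equivalence in both directions and then the circuit statement, using throughout the elementary remark that the hypothesis $F\in\Ideg$ already forces $|F[X]|\le b(X)$ for every nonempty $X\subseteq V$ (sum the bound $d^-_F(v)\le b(v)$ over $v\in X$ and note $|F[X]|\le\sum_{v\in X}d^-_F(v)$), and that this is an equality exactly when $d^-_F(v)=b(v)$ for all $v\in X$ and no arc of $F$ enters $X$ from outside, i.e.\ $d^-_F(X)=0$. The ``if'' direction is then immediate: a strong component $X$ with $|F[X]|=b(X)$ already violates the inequality $|F[X]|\le b(X)-1$ defining $\Isp$, so $F\notin\Isp$.

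For the converse, assume $F\in\Ideg\setminus\Isp$. Then some nonempty set violates $|F[\cdot]|\le b(\cdot)-1$, and by the remark above it in fact attains $|F[\cdot]|=b(\cdot)$; I would fix such a set $X$ that is inclusion-minimal and show that $X$ is a strong component of $(V,F)$. Since $d^-_F(X)=0$, no directed path of $(V,F)$ can enter $X$ from outside, so the strong component of $(V,F)$ through any vertex of $X$ is contained in $X$; it therefore suffices to rule out $(X,F[X])$ being non-strongly-connected. If it were, its condensation would have a source strong component $Y\subsetneq X$; then no arc of $F$ enters $Y$ (none from $X\setminus Y$ by the source property, none from $V\setminus X$ since $d^-_F(X)=0$), while $d^-_F(v)=b(v)$ for $v\in Y\subseteq X$, so $|F[Y]|=\sum_{v\in Y}d^-_F(v)=b(Y)$, contradicting the minimality of $X$. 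Hence $X$ is a strong component with $|F[X]|=b(X)$.

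For the circuit claim, let $X$ be any strong component of $(V,F)$ with $|F[X]|=b(X)$; it is dependent in $\Msp$ since $X$ itself witnesses $|F[X]|>b(X)-1$. As $F[X]\ne\emptyset$, it suffices to check $F[X]\setminus\{a\}\in\Isp$ for every $a\in F[X]$. Given a nonempty $Z\subseteq V$, put $W=X\cap Z$; then $(F[X]\setminus\{a\})[Z]=F[W]\setminus\{a\}$, which is empty when $W=\emptyset$, so assume $W\ne\emptyset$. Since $|F[X]|=b(X)$ forces $d^-_F(v)=b(v)$ for every $v\in X\supseteq W$ (the remark above), counting the arcs of $F$ with head in $W$ according to whether their tail lies in $W$ yields the identity $|F[W]|=b(W)-d^-_F(W)$. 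If $W=X$ then $d^-_F(X)=0$ and $|F[X]\setminus\{a\}|=b(X)-1$; if $W\subsetneq X$ then strong connectivity of $(X,F[X])$ gives an arc of $F[X]$ from $X\setminus W$ into $W$, so $d^-_F(W)\ge1$ and $|F[W]\setminus\{a\}|\le|F[W]|\le b(W)-1\le b(Z)-1$. In either case the constraint defining $\Isp$ holds, so $F[X]\setminus\{a\}\in\Isp$ and $F[X]$ is a circuit of $\Msp$.

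The step I expect to be the main obstacle is the reduction of an arbitrary violating set to a strong component: a violating set need not itself be strongly connected, and naively decomposing it into strong components can destroy the violation. The device that resolves this is to take an inclusion-minimal violating set and, when it is not strongly connected, pass to a source strong component of its condensation, whose in-degree-zero property restores the equality $|F[Y]|=b(Y)$; the same bookkeeping identity $|F[W]|=b(W)-d^-_F(W)$ then does essentially all the work in the circuit claim, the rest being routine counting of in-degrees.
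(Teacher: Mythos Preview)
Your proof is correct and follows essentially the same approach as the paper: both take an inclusion-minimal violating set $X$, use the identity $\sum_{v\in X}d^-_F(v)=|F[X]|+d^-_F(X)$ to deduce $d^-_F(X)=0$ and $d^-_F(v)=b(v)$ on $X$, and then exploit minimality to conclude that $X$ is a strong component. Your use of a source component in the condensation is just a repackaging of the paper's direct computation that $d^-_F(X')\ge 1$ for every proper nonempty $X'\subsetneq X$.

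One small point worth noting: for the circuit claim, the paper argues via the minimality of $X$ (invoking $|F[X']|\le b(X')-1$ for $X'\subsetneq X$), whereas you argue directly from the strong connectivity of $X$, which lets you establish the claim for \emph{any} strong component with $|F[X]|=b(X)$ without first knowing it is minimal. This is a mild expository gain, though the two arguments are equivalent in content (every strong component with $|F[X]|=b(X)$ is automatically minimal, by exactly your in-degree bookkeeping). In the $W=X$ case you should state explicitly that $X\subseteq Z$ gives $b(X)-1\le b(Z)-1$; it is implicit but worth writing.
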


\begin{proof}
Sufficiency is obvious: 
$|F[X]| = b(X)$ implies that $F$ is not independent in $\Msp$. 

We now prove necessity. 
Suppose that $F \in \Ideg$ is not independent in $\Msp$. 
Then, 
there exists $X$ ($\emptyset \neq X \subseteq V$) such that 
\begin{align}
\label{EQdep}
|F[X]| \ge b(X). 
\end{align}
Let $X$ be an inclusionwise minimal set satisfying \eqref{EQdep}. 
That is, 
\begin{align}
\label{EQminimal}
|F[X']| \le b(X') - 1 \quad (\emptyset \neq X' \subsetneqq X).
\end{align}

We first show that $X$ satisfies \eqref{EQstrong}. 
Since $F$ is independent in $\Mdeg$, 
it holds that 
\begin{align}
\label{EQYtight}
\mbox{$|F[Y]| \le \displaystyle\sum_{v\in Y}d_F^-(v) \le b(Y)$ for each $Y\subseteq V$}.
\end{align}  
By \eqref{EQdep} and \eqref{EQYtight}, 
it follows that 
\begin{align}
\label{EQXtight}
|F[X]| = \sum_{v \in X}d^-_F(v) = b(X), 
\end{align}
and thus  
\eqref{EQstrong} holds.

We then prove that $X$ is a strong component in $(V,F)$. 
The former equality in \eqref{EQXtight} implies that 
\begin{align}
\label{EQXin}
d^-_F(X)=\sum_{v \in X} d^-_F(v) - |F[X]| = 0. 
\end{align}
The latter equality in \eqref{EQXtight}  implies that 
\begin{align}
\label{EQvsatur}
\mbox{$d^-_F(v) = b(v)$ for every $v \in X$}. 
\end{align}
Then, 
it follows from \eqref{EQminimal} and \eqref{EQvsatur} that 
\begin{align}
\notag
d^-_F(X') 
{}&{}= \sum_{v \in X'}d^-_F(v) - |F[X']| \\
{}&{}\ge b(X') - (b(X')-1)= 1 \quad
\quad (\emptyset \neq X' \subsetneqq X). 
\label{EQXin2}
\end{align}
By 
\eqref{EQXin} and \eqref{EQXin2}, 
we have shown that $X$ is a strong component in $(V,F)$. 

We complete the proof by showing 
that $F[X]$ is a circuit in $\Msp$. 
The fact that $F[X] \notin \Isp$ directly follows from $|F[X]|=b(X)$. 
Thus, 
it is sufficient to prove that $F'=F[X] \setminus \{a\} \in \Isp$ for each arc $a \in F[X]$. 

It follows from \eqref{EQXtight} that 
\begin{align}
\label{EQX}
|F'[X]| = |F[X]| - 1 = b(X) -1. 
\end{align}
For a proper subset $X'$ of $X$, 
by \eqref{EQminimal}, 
\begin{align}
\label{EQXprime2}
|F'[X']| \le |F[X']| \le b(X')-1. 
\end{align}
By \eqref{EQX} and \eqref{EQXprime2}, 
we conclude that $F' \in \Isp$.
\end{proof}

Lemma \ref{LEMcircuit} enables us to design the following multi-phase greedy algorithm for finding a maximum-weight $b$-branching: 
\begin{itemize}
\item
Find a maximum-weight independent set $F$ in $\Mdeg$. 
\item
If $(V,F)$ has a strong component $X$ satisfying \eqref{EQdepend}, 
then contract $X$, 
reset $b$ and the weights of the remaining arcs appropriately, 
and recurse. 
\end{itemize}
A formal description of the algorithm appears in Section \ref{SECdescription}.

\subsection{Algorithm description}
\label{SECdescription}

We denote an arc $a \in A$ with initial vertex $u$ and terminal vertex $v$ by $(u,v)$. 
We assume that the arc weights are nonnegative and represented by a vector $w \in \RR_+^A$. 
For $F \subseteq A$, 
we denote $w(F)=\sum_{a \in F}w(a)$. 

Our multi-phase greedy algorithm for finding a maximum-weight $b$-branching is described as follows. 
\begin{description}
\item[\textsc{Algorithm \bb{}}.]
\item[Input.]
A digraph $D=(V,A)$, 
and 
vectors $b \in \ZZ_{++}^V$ 
and 
$w \in \RR_+^A$. 

\item[Output.]
	A $b$-branching $F \subseteq A$ maximizing $w(F)$. 
\item[Step 1.]
	Set $i:=0$, 
	$\Di{0} := D$, 
	$\bi{0} := b$,  
	and 
	$\wi{0} := w$. 
\item[Step 2.]
	Define a matroid $\Mdegi{i}=(\Ai{i}, \Idegi{i})$ 
	accordingly to $\Di{i}$ and $\bi{i}$ by \eqref{EQpartition}. 
	Then, 
	find $\Fi{i} \in \Idegi{i}$ maximizing $\wi{i}(\Fi{i})$. 
\item[Step 3.]
	If $(\Vi{i},\Fi{i})$ has a strong component $X$ such that 
	\begin{align}
	\label{EQdepend}
	|\Fi{i}[X]| = \bi{i}(X), 
	\end{align}
	then 
	go to Step 4. 
	Otherwise, 
	let $F := \Fi{i}$ and 
	go to Step 5. 
\item[Step 4.]
	Denote by $\X \subseteq 2\sp{\Vi{i}}$ the family of strong components $X$ in $(\Vi{i},\Fi{i})$ satisfying \eqref{EQdepend}. 
	Execute the following updates to construct $\Di{i+1}=(\Vi{i+1}, \Ai{i+1})$, 
	$\bi{i+1}\in \ZZ_{++}\sp{\Vi{i+1}}$, 
	and $\wi{i+1} \in \RR_+\sp{\Ai{i+1}}$. 
	\begin{itemize}
	\item
		For each $X \in \X$, 
		execute the following updates. 
		First, 
		contract $X$ to obtain a new vertex $v_X$. 
		Then, 
		for every arc $a=(z,y) \in \Ai{i}$ with $z \in \Vi{i} \setminus X$ and $y \in X$, 
		\begin{align*}
		&{}z' := 	\begin{cases}
					v_{X'} & (\mbox{$z \in X'$ for some $X' \in \X$}), \\
					z      & (\mbox{otherwise}),
					\end{cases}
					\\
		&{}a' := (z',v_X), \\
		&{}\Psi(a') := a, \\
		&{}\wi{i+1}(a') := \wi{i}(a) - \wi{i}(\alpha(a,\Fi{i})) + \wi{i}(a_X), 
		\end{align*}
		where 
		$\alpha(a,\Fi{i})$ is an arc in $\delta_{\Fi{i}}^- (y)$ minimizing $\wi{i}$, 
		and 
		$a_X$ is an arc in $\Fi{i}[X]$ minimizing $\wi{i}$. 
	\item
		Define $\bi{i+1}\in \ZZ_{++}\sp{\Vi{i+1}}$ by
		\begin{align*}
		\bi{i+1}(v) := 	\begin{cases}
							1    &(\mbox{$v = v_X$ for some $X \in \X$}),\\
							\bi{i}(v) &(\mbox{otherwise}). 
						\end{cases}
		\end{align*}
	\end{itemize}
	Let $i := i+1$ and go back to Step 2. 

\item[Step 5.]
	If $i=0$, then return $F$. 
\item[Step 6.]
	For every strong component $X$ in $(\Vi{i-1}, \Fi{i-1})$ with \eqref{EQdepend}, 
	apply the following update: 
	if there exists $a' = (z,v_X) \in F$, 
	then
	\begin{align*}
	F:= ((F \setminus \{a'\}) \cup \{\Psi(a')\}) \cup (\Fi{i-1}[X] \setminus \{\alpha(\Psi(a'),X')\}); 
	\end{align*}
	otherwise, 
	\begin{align*}
	F:= F \cup (\Fi{i-1}[X] \setminus \{a_X\}). 
	\end{align*}
	Let $i:= i-1$ and go back to Step 5. 
\end{description}

The complexity of Algorithm \bb{} is analyzed as follows. 
It is clear that there are at most $|V|$ iterations. 
It is also straightforward to see that the $i$-th iteration requires $\order{|\Ai{i}|}$ time: 
Steps 2, 3, and 4 respectively require $\order{|\Ai{i}|}$ time. 
Thus, 
the total time complexity of the algorithms is $\order{|V||A|}$.

\subsection{Optimality of the algorithm and totally dual integral system}

In this subsection,  
we prove that the output of \textsc{Algorithm \bb{}} is a maximum-weight 
$b$-branching by the following primal-dual argument. 
We first present a linear program describing the maximum-weight $b$-branching problem. 
It is a special case of the linear program for weighted matroid intersection, 
and hence 
we already know that 
the linear system is endowed with total dual integrality. 
Here we show an algorithmic proof for the total dual integrality. 
That is, 
we show that, 
when $w$ is an integer vector, 
integral optimal primal and dual solutions can be computed via 
\textsc{Algorithm \bb}. 

Consider the following linear program, 
in variable $x \in \RR\sp{A}$, 
associated with the maximum-weight $b$-matching problem: 
\begin{alignat}{3}
\label{EQlp0}
&{}\mbox{maximize} \quad {}&&{}\sum_{a \in A}w(a)x(a) &&\\
\label{EQlp1}
&{}\mbox{subject to}\quad{}&&{}x(\delta_A^-(v)) \le b(v) \quad {}&&{}(v \in V), \\
\label{EQlp2}
&&&{}x(A[X]) \le b(X) - 1 \quad    {}&&{}(\emptyset \neq X \subseteq V), \\
\label{EQlp3}
&&&{}0\le x(a) \le 1               {}&&{}(a\in A). 
\end{alignat}
The constraints \eqref{EQlp1}--\eqref{EQlp3} are indeed a special case of 
a linear system describing the common independent sets in two matroids, 
which is totally dual integral (see \cite{Sch03}). 
Thus, 
we obtain the following theorem. 

\begin{theorem}
\label{THMpolytope}
The linear system \eqref{EQlp1}--\eqref{EQlp3} is totally dual integral. 
In particular, 
the linear system \eqref{EQlp1}--\eqref{EQlp3} determines the $b$-branching polytope. 
\end{theorem}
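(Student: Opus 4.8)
The plan is to derive the two assertions of the theorem from Edmonds' theorem that the linear system defining the polytope of common independent sets of two matroids is totally dual integral (see~\cite{Sch03}), applied to $\Mdeg$ and $\Msp$ on the ground set $A$, exactly as the branching case (Corollary~\ref{CORbTDI}) is treated in~\cite{Sch03}.

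First I would write down the two rank functions. Since $\Mdeg$ is the direct sum of the rank-$b(v)$ uniform matroids on $\delta_A^-(v)$, we have $r_{\Mdeg}(U)=\sum_{v\in V}\min\{b(v),\,|U\cap\delta_A^-(v)|\}$ for $U\subseteq A$. Since $\Msp$ is a count (sparsity) matroid, its rank function admits the standard min-formula (see~\cite{Fra11}) $r_{\Msp}(U)=\min\{\sum_i(b(X_i)-1)+|U\setminus\bigcup_i A[X_i]|\}$, the minimum over all families of pairwise disjoint nonempty $X_1,\dots,X_t\subseteq V$ (the empty family being allowed, giving $r_{\Msp}(U)\le|U|$). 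Edmonds' matroid intersection theorem then asserts that $x\ge 0$, $x(U)\le r_{\Mdeg}(U)$, $x(U)\le r_{\Msp}(U)$ $(U\subseteq A)$ is totally dual integral and that its integral solutions are exactly the characteristic vectors of the sets in $\Ideg\cap\Isp$, i.e.\ the $b$-branchings; since $b\ge\mathbf 1$, the singleton inequalities read $x(a)\le 1$.

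Second I would show that \eqref{EQlp1}--\eqref{EQlp3} defines the same polyhedron and, crucially, that each inequality of the rank system is a nonnegative \emph{integral} combination, with matching right-hand side, of inequalities of \eqref{EQlp1}--\eqref{EQlp3}. For $r_{\Mdeg}$ this is immediate: split $U$ over terminal vertices and, for a vertex $v$, use a sum of constraints $x(a)\le 1$ over $U\cap\delta_A^-(v)$ if $|U\cap\delta_A^-(v)|\le b(v)$, and otherwise the constraint~\eqref{EQlp1} for $v$ plus nonnegativity of the arcs of $\delta_A^-(v)$ outside $U$. For $r_{\Msp}$ take a disjoint family attaining the minimum, add the constraints~\eqref{EQlp2} for the $X_i$ (plus nonnegativity of the arcs of $A[X_i]$ outside $U$) and the constraints $x(a)\le 1$ for the arcs of $U$ in no $X_i$. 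Granting these identities, the routine fact that a system defining the same polyhedron as a totally dual integral system, and through whose inequalities every inequality of that system factors as a nonnegative integral combination, is itself totally dual integral (proved by substituting these combinations into an integral optimal dual solution of the rank system) yields total dual integrality of \eqref{EQlp1}--\eqref{EQlp3}. For the ``in particular'' part, integrality of the right-hand sides then makes the polyhedron integral, and since its integral points are precisely the characteristic vectors of the members of $\Ideg\cap\Isp$, it is the $b$-branching polytope.

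The step I expect to be the main obstacle is this second one: having the correct rank formula for the sparsity matroid at hand and checking the integral-combination identities, which is routine but not purely formal and follows the branching argument of~\cite{Sch03}. A harmless wrinkle is that when $|\delta_A^-(v)|<b(v)$ the inequality $x(\delta_A^-(v))\le b(v)$ is not literally an inequality of the rank system, but it is then implied by the constraints $x(a)\le 1$, so the two polyhedra still coincide and the transfer of total dual integrality is unaffected. A natural alternative, closer to the constructive spirit of this paper, is to bypass the rank functions entirely and instead verify that Algorithm~\bb{}, run on an integral weight vector $w$, outputs an integral optimal solution of the linear-programming dual of \eqref{EQlp0}--\eqref{EQlp3}, thereby certifying total dual integrality directly.
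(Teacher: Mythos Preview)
Your proposal is correct and follows essentially the same route as the paper: the paper simply asserts that \eqref{EQlp1}--\eqref{EQlp3} ``is indeed a special case of a linear system describing the common independent sets in two matroids, which is totally dual integral (see~\cite{Sch03})'' and then states Theorem~\ref{THMpolytope} without further argument, whereas you spell out the reduction (rank formulas for $\Mdeg$ and $\Msp$, and the nonnegative integral combinations transferring TDI from the full rank system to \eqref{EQlp1}--\eqref{EQlp3}). The alternative you mention at the end---extracting an integral optimal dual from Algorithm~\bb{}---is precisely the second, constructive argument the paper gives in the paragraphs following the theorem.
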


The dual problem of \eqref{EQlp0}--\eqref{EQlp3}, 
in variable $p \in \RR\sp{2^V}$ and $q \in \RR\sp{A}$, 
is described as follows. 

\begin{alignat}{2}
\label{EQdual0}
&{}\mbox{minimize} \quad {}&&{}\sum_{v \in V}b(v)p(v) + \sum_{X \colon \emptyset \neq X \subseteq V}(b(X) - 1)p(X) + \sum_{a\in A}q(a)\\
&{}\mbox{subject to} \quad {}&&{} p(v) + \sum_{X\colon a \in A[X]}p(X) + q(a) \ge w(a) \quad (a=uv \in A), \\
&&&{}p(X) \ge 0 \quad (X \subseteq V), \\
&&&{}q(a)  \ge 0 \quad (a \in A).
\label{EQdual2}
\end{alignat}

An optimal solution $(p^*,q^*)$ is computed via \textsc{Algorithm \bb} in the following manner. 
At the beginning of \textsc{Algorithm \bb}, 
set 
$w^\circ=w$. 
In Step 4 of \textsc{Algorithm \bb}, 
for each strong component $X \in \X$, 
define $p^*(X) \in \RR$ by 
\begin{align*}
\notag
p^*(X) = \min
\{
	{}&{}\min\{w^\circ(\alpha^\circ(a)) - w^\circ(a) \colon a \in \delta_{\Ai{i}}^-(X)\} , 
	\min\{w^{\circ}(a') \colon a' \in \Fi{i}[X] \}
\}, 
\end{align*}
where 
$\alpha^\circ(a)$ is the $b(y)$-th optimal arc 
with respect to $w^\circ$ 
among the arcs sharing the terminal vertex $y \in V$ with $a$ in the original digraph $D$. 
Then 
for each arc $a \in A$ such that 
$a \in \Ai{i}[X]$ or $a$ is deleted in the contraction of $X'$ with $v_{X'}$ included in $X$, 
set 
$w^\circ(a) :=w^\circ(a) - p^*(X)$.  
After the termination of \textsc{Algorithm \bb}, 
let 
the value $p^*(v)$ be equal to the $b(v)$-th maximum  value among $\{w^\circ(a)\colon a \in \delta_A^-(v)\}$ 
for each vertex $v \in V$. 
Finally, 
let $q^*(a) = \max \{w(a) - p^*(v) - \sum_{X \colon a \in A[X]}p^*(X),0\}$. 

It is straightforward to see that 
the characteristic vector of the output $F$ and $(p^*,q^*)$  
satisfy the complementary slackness condition. 
Thus 
they are 
optimal solutions for the linear programs \eqref{EQlp0}--\eqref{EQlp3} and \eqref{EQdual0}--\eqref{EQdual2}, 
respectively. 
Moreover, 
$(p^*,q^*)$ is integer if $w$ is integer, 
which implies that 
\eqref{EQlp1}--\eqref{EQlp3} is totally dual integral.

\subsection{Existence of a $b$-branching with prescribed indegree}

Our algorithm leads to the following theorem characterizing the existence of 
$b$-branching with prescribed indegree, 
which is an extension of 
that for arborescences. 

\begin{theorem}
\label{THMarb}
Let $D=(V,A)$ be a digraph 
and 
$b\in \ZZ_{++}\sp{v}$ be a positive integer vector on $V$. 
Let $b' \in \ZZ_{+}^V$ be a nonnegative integer vector such that 
$b'(v) \le b(v)$ for every $v \in V$ and 
$b' \neq b$. 
Then, 
$D$ has a $b$-branchings $B$ such that 
$d^-_{B}(v) = b'(v)$ for each $v \in V$
if and only if 
\begin{alignat}{2}
\label{EQexistdeg}
{}&{}d^-_A(v) \ge b'(v)\quad {}&&{}(v \in V),\\
\label{EQexistcut}
{}&{}d^-_A(X) \ge 1 {}&&{}(\mbox{$\emptyset \neq X \subsetneq V$, $b'(X) = b(X) \neq 0$}). 
\end{alignat}
\end{theorem}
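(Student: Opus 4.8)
The plan is to prove Theorem~\ref{THMarb} by reducing it to the existence of a maximum-weight $b$-branching computed by \textsc{Algorithm \bb{}} with a suitably chosen weight vector, mirroring the classical argument for arborescences. Necessity is the easy direction: if $B$ is a $b$-branching with $d^-_B(v)=b'(v)$ for all $v$, then \eqref{EQexistdeg} holds since $\delta^-_B(v)\subseteq\delta^-_A(v)$; for \eqref{EQexistcut}, take $X$ with $\emptyset\neq X\subsetneq V$ and $b'(X)=b(X)\neq 0$, and observe that $|B[X]|=\sum_{v\in X}d^-_B(v)-d^-_B(X)=b(X)-d^-_B(X)$, so if $d^-_B(X)=0$ then $|B[X]|=b(X)$, contradicting $B\in\Isp$ via \eqref{EQsparsity}; hence $d^-_B(X)\ge 1$ and in particular $d^-_A(X)\ge 1$.

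For sufficiency I would argue as follows. Assume \eqref{EQexistdeg} and \eqref{EQexistcut}. First, \eqref{EQexistdeg} guarantees that we can pick an arc set $F_0\in\Ideg$ with $d^-_{F_0}(v)=b'(v)$ for every $v$; the question is whether we can do so while also landing in $\Isp$. The natural approach is to run the multi-phase greedy machinery: define a weight vector $w$ that makes it strictly preferable to take $b'(v)$ arcs into each $v$ (for instance, a large weight $M$ on a carefully designed structure, or more simply, start from $F_0$ and repair it). I would instead take the cleaner route of invoking Lemma~\ref{LEMcircuit} directly through a contraction induction on $|V|$: if $(V,F_0)$ has a strong component $X$ with $|F_0[X]|=b(X)$, then by \eqref{EQXin} we have $d^-_{F_0}(X)=0$, whence $b'(X)=\sum_{v\in X}d^-_{F_0}(v)=\sum_{v\in X}b(v)-d^-_{F_0}(X)$... wait, more carefully: \eqref{EQvsatur}-type reasoning gives $d^-_{F_0}(v)=b(v)$ for $v\in X$ only if the minimal such $X$ is used, so $b'(X)=b(X)$; but $X\neq V$ is impossible by \eqref{EQexistcut} (which forces $d^-_A(X)\ge 1$, yet we need $d^-_{F_0}(X)=0$, a contradiction once we choose $F_0$ to also maximize $d^-_{F_0}(X)$ among valid choices), and $X=V$ is impossible since $b'\neq b$ means $b'(V)<b(V)=|F_0[V]|\le \sum d^-_{F_0}(v)=b'(V)$, contradiction. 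So no bad strong component exists and $F_0$ is already a $b$-branching.

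The main obstacle I anticipate is the choice of $F_0$: a generic $F_0\in\Ideg$ with the right indegrees need not avoid a tight strong component $X$ with $b'(X)=b(X)$, so I must argue that among all such $F_0$ one can be chosen (or locally modified) so that every tight strong component is ruled out by \eqref{EQexistcut}. The right formalization is to pick $F_0$ so that it is also ``maximally acyclic'' in an appropriate sense — e.g., minimizing the number of arcs lying inside strong components — and then show that if a tight strong component $X$ remained, one could use an arc of $\delta^-_A(X)$ (which exists by \eqref{EQexistcut} when $X\neq V$) to swap out an arc inside $X$, decreasing that count while preserving membership in $\Ideg$ and the prescribed indegrees, a contradiction. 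Handling the $X=V$ case is immediate from $b'\neq b$ as sketched above. Once $F_0\in\Ideg\cap\Isp$ is established, it is by definition the desired $b$-branching $B$, completing the proof.
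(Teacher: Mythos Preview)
Your necessity argument is correct, and your handling of the case $X=V$ is fine (modulo the harmless ``$\le$'' that should be ``$=$''). The gap is in your exchange step for $X\subsetneq V$. Concretely, your proposed potential (number of arcs lying inside strong components) can \emph{increase} after the swap you describe. Take $V=\{1,2,3,4\}$, $b\equiv 1$, $b'(1)=b'(2)=b'(3)=1$, $b'(4)=0$, arcs $A=\{(1,2),(2,1),(2,3),(3,1),(4,3)\}$, and $F_0=\{(1,2),(2,1),(2,3)\}$. The unique tight strong component is $X=\{1,2\}$, and $\delta_A^-(X)=\{(3,1)\}$. Swapping $(2,1)$ for $(3,1)$ yields $F_0'=\{(1,2),(2,3),(3,1)\}$, whose strong component $\{1,2,3\}$ is tight and contains three arcs, so your count goes from $2$ up to $3$. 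The problem is that the tail $u$ of your chosen arc may itself be unreachable from the ``root set'' $R=\{v:b'(v)<b(v)\}$, so the swap just pushes the obstruction around.

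The fix is to change both the potential and the set you cut. Let $Z$ be the set of vertices not reachable from $R$ in $(V,F_0)$; one checks $Z\neq\emptyset$ iff $F_0\notin\Isp$ (every tight strong component lies in $Z$, and any source strong component of $(Z,F_0[Z])$ is tight). Since $Z\cap R=\emptyset$ and $b'\equiv b$ on $Z$, condition~\eqref{EQexistcut} applied to $Z$ gives an arc $a=(u,v)\in\delta_A^-(Z)$ with $u$ reachable from $R$; swapping $a$ for any $a'\in\delta_{F_0}^-(v)$ (whose tail lies in $Z$) strictly shrinks $Z$. This terminates in a $b$-branching with the prescribed indegrees. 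This exchange argument is genuinely different from the paper's two proofs (running \textsc{Algorithm \bb{}}, or specializing the inductive proof of Theorem~\ref{THMbbpacking} to $k=1$); it is more elementary but requires the right invariant, which your sketch does not supply.
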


Let $r \in V$ be a specified vertex. 
A characterization of the existence of an $r$-arborescence \cite{Boc71,Edm67,Ful74} is obtained 
as a special case of Theorem \ref{THMarb}, 
by putting 
$b(v)=1$ for every $v \in V$, 
$b'(v)=1$ for every $v \in V \setminus \{r\}$, 
and $b'(r)=0$. 

Theorem \ref{THMarb} can be proved in two ways. 
The necessity of \eqref{EQexistdeg} and \eqref{EQexistcut} is clear. 
One way to derive the sufficiency of \eqref{EQexistdeg} and \eqref{EQexistcut} is 
Algorithm \bb. 
Apply Algorithm \bb{} to the case where $b=b'$ and $w(a)=1$ for each $a \in A$. 
Then, 
\eqref{EQexistdeg} and \eqref{EQexistcut} certify that 
$F^{(i)}$ found in Step 2 of Algorithm \bb{} is always a base of $\Min\sp{(i)}$. 
It thus follows that 
the output $F$ of Algorithm \bb{} is a $b$-branching with $d_F^-=b'$. 

An alternative proof for the sufficiency of \eqref{EQexistdeg} and \eqref{EQexistcut} 
is implied by  
the proof for Theorem \ref{THMbbpacking} in Section \ref{SECpacking}, 
which extends Theorem \ref{THMarb}  to 
a characterization of the existence of disjoint $b$-branchings with prescribed indegree.

\section{Packing disjoint $b$-branchings}
\label{SECpacking}

In this section, 
we present a theorem on packing disjoint $b$-branchings $B_1,\ldots, B_k$ with 
prescribed indegree, 
which extends Theorem \ref{THMbpacking}, as well as Theorem \ref{THMarb}. 
Our proof is an extension of the proof for Theorem \ref{THMbpacking} by Lov\'{a}sz \cite{Lov76}. 
We then show that 
such disjoint $b$-branchings can be found in strongly polynomial time. 
We further show that disjoint $b$-branchings $B_1, \ldots, B_k$ minimizing the weight $w(B_1)+ \cdots + w(B_k)$ 
can be found in strongly polynomial time. 
Finally, 
as a consequence of our packing theorem, 
we prove the integer decomposition property of the $b$-branching polytope. 

\subsection{Characterizing theorem for disjoint $b$-branchings}

Let $D=(V,A)$ be a digraph, 
$b\in \ZZ_{++}^V$ be a positive integer vector on $V$, 
and 
$k$ be a positive integer. 
For $i\in [k]$, 
let $b_i \in \ZZ_{+}^V$ be a nonnegative integer vector such that 
$b_i(v) \le b(v)$ for every $v \in V$ and 
$b_i \neq b$. 
We present a theorem for chracterizing whether $D$ contains disjoint $b$-branchings 
$B_1,\ldots, B_k$ such that 
$d^-_{B_i} = b_i$
for each $i \in [k]$.  

We begin with introducing 
a function which plays a key role in the sequel. 
Define a function $g: 2^V \to \ZZ_+$ by
\begin{align}
\label{EQg}
g(X) = |\{ i \in [k] \colon b_i(X) = b(X) \neq 0\}| \quad (X \subseteq V). 
\end{align}
The following lemma is straightforward to observe. 
\begin{lemma}
\label{LEMgsup}
The function $g$ is supermodular. 
\end{lemma}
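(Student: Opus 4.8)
The plan is to verify directly from the definition \eqref{EQg} that for any two sets $X,Y\subseteq V$ we have $g(X)+g(Y)\le g(X\cap Y)+g(X\cup Y)$. The natural approach is to split the index set $[k]$ according to which of the four ``tightness'' conditions $b_i(X)=b(X)$, $b_i(Y)=b(Y)$, $b_i(X\cap Y)=b(X\cap Y)$, $b_i(X\cup Y)=b(X\cup Y)$ each $i$ satisfies, and show that each $i$ contributing to the left-hand side contributes at least as much to the right-hand side. So first I would fix an arbitrary $i\in[k]$ and assume $b_i(X)=b(X)\ne 0$ and $b_i(Y)=b(Y)\ne 0$, i.e. $i$ counts toward both $g(X)$ and $g(Y)$; the goal is then to show $i$ counts toward both $g(X\cap Y)$ and $g(X\cup Y)$.

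The key computation is modularity of the set functions $b_i$ and $b$: since $b_i(Z)=\sum_{v\in Z}b_i(v)$ and likewise for $b$, we have the identities $b(X)+b(Y)=b(X\cap Y)+b(X\cup Y)$ and $b_i(X)+b_i(Y)=b_i(X\cap Y)+b_i(X\cup Y)$. Writing $\Delta_i(Z):=b(Z)-b_i(Z)\ge 0$ for all $Z$ (this nonnegativity is exactly the hypothesis $b_i\le b$), modularity gives $\Delta_i(X)+\Delta_i(Y)=\Delta_i(X\cap Y)+\Delta_i(X\cup Y)$. The tightness condition $b_i(Z)=b(Z)$ is precisely $\Delta_i(Z)=0$. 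So if $\Delta_i(X)=\Delta_i(Y)=0$, then $\Delta_i(X\cap Y)+\Delta_i(X\cup Y)=0$, and since both terms are nonnegative, both vanish: $b_i(X\cap Y)=b(X\cap Y)$ and $b_i(X\cup Y)=b(X\cup Y)$.

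The remaining point is the nonvanishing side condition ``$\ne 0$'' in \eqref{EQg}. Here I would observe that $b\in\ZZ_{++}^V$, so $b(Z)=0$ if and only if $Z=\emptyset$. From $b(X)\ne 0$ we get $X\ne\emptyset$, hence $X\cup Y\ne\emptyset$ and $b(X\cup Y)\ne 0$, so $i$ indeed counts toward $g(X\cup Y)$. For the intersection there are two cases: if $X\cap Y\ne\emptyset$ then $b(X\cap Y)\ne 0$ and $i$ counts toward $g(X\cap Y)$ as well, so $i$ contributes $2$ to each side. If $X\cap Y=\emptyset$, then $i$ does not count toward $g(X\cap Y)$, but in that case modularity gives $b(X\cup Y)=b(X)+b(Y)-b(\emptyset)=b(X)+b(Y)$ and similarly $b_i(X\cup Y)=b_i(X)+b_i(Y)=b(X)+b(Y)$, so $i$ still counts toward $g(X\cup Y)$ — contributing $2$ to the left-hand side (from $g(X),g(Y)$) and $1$ to the right (from $g(X\cup Y)$). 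This is the one case where the inequality is slack rather than tight; I should double-check that $g$ is not required to be intersecting-supermodular only, but plain supermodularity of the form $g(X)+g(Y)\le g(X\cap Y)+g(X\cup Y)$ still holds since $1\le 2$ fails in the wrong direction — wait, let me re-examine: left-hand contribution $2$, right-hand contribution $1$, so this $i$ violates the inequality locally.

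Hence the main obstacle is exactly this disjoint case, and it forces a correction: the correct reading must be that an index $i$ with $X\cap Y=\emptyset$ cannot satisfy both $b_i(X)=b(X)\ne 0$ and $b_i(Y)=b(Y)\ne 0$ while also needing to be double-counted — but it can. So I expect the actual resolution is that supermodularity here is meant as \emph{intersecting} supermodularity (required only when $X\cap Y\ne\emptyset$), or else one restricts attention to a laminar/crossing family; in the body of the packing argument only crossing or intersecting pairs arise. I would therefore state and prove the inequality for $X\cap Y\ne\emptyset$, where the argument above is clean: every $i$ contributing $c\in\{0,1\}$ (really $0$ or $2$, but at most the count) to $g(X)$ plus its contribution to $g(Y)$ is matched by its contribution to $g(X\cap Y)+g(X\cup Y)$, via the nonnegativity of $\Delta_i$ together with $\Delta_i(X)+\Delta_i(Y)=\Delta_i(X\cap Y)+\Delta_i(X\cup Y)$ and the fact that $\Delta_i(Z)=0,\ Z\ne\emptyset$ is an upward-and-downward closed condition under the modular identity. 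Summing over $i\in[k]$ yields $g(X)+g(Y)\le g(X\cap Y)+g(X\cup Y)$, completing the proof.
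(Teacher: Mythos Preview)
Your instinct at the ``wait, let me re-examine'' moment is correct, and you should trust it: the lemma as stated in the paper is actually \emph{false}. A concrete counterexample is $V=\{1,2,3\}$, $b\equiv 1$, $k=1$, $b_1=(1,1,0)$; then with $X=\{1\}$, $Y=\{2\}$ one gets $g(X)+g(Y)=2$ but $g(X\cap Y)+g(X\cup Y)=g(\emptyset)+g(\{1,2\})=0+1=1$. The paper's own proof has the same gap: the assertion $g(X\cap Y)\ge |I_X\cup I_Y|$ silently uses $X\cap Y\ne\emptyset$ (so that $b(X\cap Y)\ne 0$), and fails otherwise. What is true, and what you correctly isolate, is \emph{intersecting} supermodularity: the inequality holds whenever $X\cap Y\ne\emptyset$. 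This is exactly what the paper needs---in the proof of the packing theorem the inequality \eqref{EQsubsup3} is applied to $X$ and $W$ with $v\in X\cap W$, and the later submodular-flow argument explicitly invokes only crossing supermodularity.

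Your argument for the intersecting case is essentially the paper's, but you should make it cleaner and, in particular, cover the case where $i$ contributes to exactly one of $g(X),g(Y)$; your final paragraph waves at this but does not nail it down. The transparent way is to set $T_i:=\{v\in V: b_i(v)=b(v)\}$ and observe (from $b_i\le b$) that $b_i(Z)=b(Z)$ iff $Z\subseteq T_i$; hence $i$ contributes to $g(Z)$ iff $\emptyset\ne Z\subseteq T_i$. Now fix $X,Y$ with $X\cap Y\ne\emptyset$. If $i$ contributes to $g(X)$ then $X\subseteq T_i$, hence $\emptyset\ne X\cap Y\subseteq T_i$ and $i$ contributes to $g(X\cap Y)$; if $i$ contributes to both $g(X)$ and $g(Y)$ then $X\cup Y\subseteq T_i$ and $i$ also contributes to $g(X\cup Y)$. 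Summing over $i$ gives $g(X)+g(Y)\le g(X\cap Y)+g(X\cup Y)$. This is the complete argument you want, and it dispenses with the $\Delta_i$ bookkeeping.
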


\begin{proof}
For $X \subseteq V$, 
define $I_X \subseteq [k]$ by 
$I_X = \{ i \in [k] \colon b_i(X) = b(X) \neq 0\}$. 
By the definition \eqref{EQg} of $g$, 
for $X,Y \subseteq V$, 
it holds that 
\begin{align*}
g(X) + g(Y) = |I_X| + |I_Y| = |I_X \setminus I_Y| + |I_Y \setminus I_X| + 2|I_X \cap I_Y|. 
\end{align*}
Moreover, 
it is straightforward to see that 
\begin{align*}
&{}g(X \cup Y) = |I_X \cap I_Y|, &
&{}g(X \cap Y) \ge 
|I_X \cup I_Y| = |I_X \setminus I_Y| + |I_Y \setminus I_X| + |I_X \cap I_Y|. 
\end{align*}
It therefore holds that 
$g(X)+g(Y) \le g(X \cup Y) + g(X \cap Y)$, 
and 
hence $g$ is a supermodular function. 
\end{proof}

Our characterization theorem is described as follows. 
\begin{theorem}
\label{THMbbpacking}
Let $D=(V,A)$ be a digraph, 
$b\in \ZZ_{++}^V$ be a positive integer vector on $V$, 
and 
$k$ be a positive integer. 
For $i\in [k]$, 
let $b_i \in \ZZ_{+}^V$ be a nonnegative integer vector such that 
$b_i(v) \le b(v)$ for every $v \in V$ and 
$b_i \neq b$. 
Then, 
$D$ has disjoint $b$-branchings $B_1,\ldots, B_k$ such that 
$d^-_{B_i} = b_i$
for each $i \in [k]$ 
if and only if 
the following two conditions are satisfied: 
\begin{alignat}{2}
\label{EQpackingdeg}
{}&{}d^-_A(v) \ge \sum_{i=1}^k b_i(v)\quad {}&&{}(v \in V),\\
\label{EQpackingcut}
{}&{}d^-_A(X) \ge g(X) \quad{}&&{} (X \subseteq V). 
\end{alignat}
\end{theorem}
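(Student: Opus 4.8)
The plan is to prove necessity first, which is routine, and then devote the bulk of the effort to sufficiency, proved by induction on $\sum_{v\in V}(b(v)-b_k(v))$ (or equivalently on the number of vertices $v$ with $b_k(v)<b(v)$, weighted appropriately), following the strategy of Lov\'asz's proof of Theorem~\ref{THMbpacking}. For necessity, if disjoint $b$-branchings $B_1,\ldots,B_k$ with $d^-_{B_i}=b_i$ exist, then summing indegrees at each vertex gives \eqref{EQpackingdeg}; and for \eqref{EQpackingcut}, given $X$, each index $i\in[k]$ with $b_i(X)=b(X)\ne0$ forces $B_i[X]$ to have $b(X)$ arcs inside $X$, but $B_i\in\Isp$ means $|B_i[X]|\le b(X)-1$, so $B_i$ must have at least one arc entering $X$ from outside; since the $B_i$ are disjoint, $d^-_A(X)\ge g(X)$.

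For sufficiency, the inductive step is where the real work lies. The base case is when $b_k=\mathbf{0}$ (or more generally when one of the $b_i$ vanishes), reducing to $k-1$ branchings, ultimately grounding in Theorem~\ref{THMarb}. For the inductive step, assume $b_k\ne\mathbf{0}$ and pick a vertex $v_0$ with $b_k(v_0)<b(v_0)$ together with an arc $a_0=(u_0,v_0)\in A$; the goal is to assign $a_0$ to $B_k$, i.e.\ to reduce to the instance with $b'_k:=b_k+\chi_{v_0}$ (and $A':=A$, having earmarked $a_0$), check that \eqref{EQpackingdeg}--\eqref{EQpackingcut} still hold for the reduced instance after deleting $a_0$, apply induction, and add $a_0$ back into $B_k$. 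The subtlety—exactly as in Lov\'asz's argument—is that an arbitrary choice of $a_0$ may violate \eqref{EQpackingcut} for the new instance: deleting $a_0$ drops $d^-_A(X)$ by one for sets $X\ni v_0$ with $u_0\notin X$, while incrementing $b_k(v_0)$ may raise $g(X)$ by one for sets $X$ with $b_k(X)=b(X)-1$, $v_0\in X$, $b(X)\ne0$ and $k\notin I_X$. One must show that a good choice of $a_0$ (or of $v_0$) always exists. The standard device is to look at the family of ``dangerous'' sets $X$ (those tight for \eqref{EQpackingcut}, i.e.\ $d^-_A(X)=g(X)$, that would become violated), use the supermodularity of $g$ from Lemma~\ref{LEMgsup} together with the submodularity of $d^-_A(\cdot)$ to show this family is closed under intersection (hence has a unique minimal member, or behaves like a lattice family), and then choose $v_0$ and $a_0$ so that $a_0$ enters no dangerous set—e.g.\ take $v_0$ in a minimal dangerous set $X^*$ and $a_0\in\delta^-_A(X^*)$, which by minimality enters no smaller dangerous set, while crossing arcs of larger dangerous sets are handled because incrementing $b_k$ at $v_0\in X^*\subseteq X$ and simultaneously deleting an arc entering $X$ keeps the relevant inequality in balance.

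Concretely, the key steps in order: (1) prove necessity; (2) set up the induction on $\sum_v(b(v)-b_k(v))$ and dispose of the base case via Theorem~\ref{THMarb}; (3) given a nontrivial instance, define the dangerous family $\mathcal D=\{X: d^-_A(X)=g(X),\ X$ would be violated after incrementing $b_k$ at some candidate vertex$\}$ and establish its intersection-closedness from Lemma~\ref{LEMgsup} and submodularity of $d^-_A$; (4) select $v_0$ (with $b_k(v_0)<b(v_0)$) and $a_0=(u_0,v_0)$ so that $a_0$ enters no set in $\mathcal D$, arguing such a pair exists by a minimality/counting argument—handling separately the degenerate situation where no such $v_0$ is available, in which case one shows directly that $b_k$ can be ``completed'' along a subgraph forced by the tight sets; (5) verify \eqref{EQpackingdeg}--\eqref{EQpackingcut} for the instance $(D-a_0,\,b_1,\ldots,b_{k-1},\,b_k+\chi_{v_0})$; (6) invoke the inductive hypothesis and put $a_0$ into $B_k$, checking $B_k=B_k'\cup\{a_0\}\in\Ideg\cap\Isp$—the $\Ideg$ part is immediate from $b_k(v_0)+1\le b(v_0)$, and the $\Isp$ part follows because adding $a_0$ increases $|B_k[X]|$ only for $X$ with $v_0\in X\ni u_0$, where $|B_k'[X]|\le b(X)-2$ since $k$ was not counted in $g(X)$ for such $X$ before the update (this is where the choice in step~(4) pays off). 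I expect step~(4), the existence of an admissible pair $(v_0,a_0)$, to be the main obstacle: it is precisely the uncrossing argument that makes Lov\'asz's proof work, and here it must be carried out with the function $g$ in place of $|\{i:X\subseteq U_i\}|$, so the bookkeeping around the condition ``$b_i(X)=b(X)\ne0$'' (and in particular the role of the exclusion $b_i\ne b$) needs care.
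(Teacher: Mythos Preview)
Your high-level strategy---Lov\'asz-style induction with uncrossing via Lemma~\ref{LEMgsup}---is the paper's, but the execution has errors that break the argument. First, a sign problem: assigning $a_0=(u_0,v_0)$ to $B_k$ means the residual $B'_k$ needs \emph{one fewer} arc into $v_0$, so $b'_k=b_k-\chi_{v_0}$, not $b_k+\chi_{v_0}$; with your sign, $B'_k\cup\{a_0\}$ has indegree $b_k(v_0)+2$ at $v_0$, and your stated base case $b_k=\mathbf{0}$ is the \emph{maximum} of your induction parameter $\sum_v(b(v)-b_k(v))$, not the minimum. The paper simply inducts on $\sum_{i} b_i(V)$, decrements some $b_1(v)$, and has trivial base case all $b_i\equiv 0$.

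Second, even with the sign corrected, your choice of vertex and arc is off. To decrement at the head $v_0$ you need $b_k(v_0)>0$, not $b_k(v_0)<b(v_0)$; in the paper these two conditions live on \emph{different} endpoints of the arc $a=(u,v)$: one requires $b_1(v)>0$ (so the decrement is legal) and $b_1(u)<b(u)$ (this is what yields $B_1\cup\{a\}\in\Isp$, via $|(B_1\cup\{a\})[X]|\le\sum_{w\in X}d^-_{B_1}(w)+1=b_1(X)\le b(X)-1$ for any $X\ni u,v$). Moreover, the paper takes $a$ \emph{internal} to a minimal tight set $W$, not entering it: if you take $a_0\in\delta^-_A(X^*)$ with $X^*$ a minimal tight set having $b_k(X^*)<b(X^*)$, then deleting $a_0$ drops $d^-_A(X^*)$ while $g(X^*)$ is unchanged (index $k$ contributed neither before nor after), so $X^*$ itself becomes violated. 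The existence of a suitable internal arc is the content of the paper's Claim~1, proved by a case split on whether $W\cap V_2=\emptyset$; the uncrossing then handles all other tight sets. Finally, your step~(6) justification (``$k$ was not counted in $g(X)$'') does not by itself bound $|B'_k[X]|$; that bound comes from the indegree inequality together with the tail condition on $u$.
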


\begin{proof}
Necessity is clear. 
We prove sufficiency by induction on $\sum_{i=1}^k b_i(V)$. 
The case $\sum_{i=1}^k b_i(V) = 0$ is trivial;
$B_i = \emptyset$ for each $i \in [k]$. 

Without loss of generality, 
suppose that $b_1(V) >0$. 
Define a partition $\{V_0, V_1, V_2\}$ of $V$ by 
\begin{align*}
&{}V_0 = \{ v \in V \colon b_1(v) = 0\}, \\
&{}V_1 = \{ v \in V \colon 0<b_1(v) < b(v)\}, \\
&{}V_2 = \{ v \in V \colon b_1(v) = b(v)\}.
\end{align*}
Then, 
it holds that 
\begin{align}
\label{EQpartB}
&{}V_0 \cup V_1 \neq \emptyset, \\
\label{EQpartR}
&{}V_0 \neq V, 
\end{align}
which follow from $b_1 \neq b$ and $b_1(V)>0$, respectively.

Let $W \subseteq V$ be an inclusionwise minimal vertex subset satisfying 
\begin{align}
\label{EQw1}
&{}W \cap (V_0\cup V_1) \neq \emptyset, \\
\label{EQw2}
&{}W \setminus V_0 \neq \emptyset, \\
\label{EQw3}
&{}d^-_A(W)=g(W).
\end{align}
Such $W \subseteq V$ always exists, 
because $W=V$ satisfies \eqref{EQw1}--\eqref{EQw3}: 
\eqref{EQw1} follows from \eqref{EQpartB}; 
\eqref{EQw2} follows from \eqref{EQpartR}; 
and 
\eqref{EQw3} follows from $b_i\neq b$ ($i \in [k]$) and hence $g(V)=0$.  
Let $W_j = W \cap V_j$ ($j=0,1,2$). 

\begin{claim}
\label{CLa}
There exists an arc $(u,v) \in A$ such that $u \in W_0\cup W_1$ and $v \in W_1 \cup W_2$.
\end{claim}

\begin{proof}
First, 
suppose that $W_2 \neq \emptyset$. 
Then, 
it holds that $g(W_2)>g(W)$, 
since 
every $i \in [k]$ contributing to $g(W)$ also contributes to $g(W_2)$, 
and 
$i=1$ does not contribute to $g(W)$ but to $g(W_2)$. 
Hence we obtain 
that
\begin{align}
d^-_A(W_2) \ge  g(W_2) 
> g(W) = d^-_A(W). 
\label{EQw22}
\end{align}
Now \eqref{EQw22}
implies that 
there exists an arc $(u,v) \in A$ such that $u \in W_0\cup W_1$ and $v \in W_2$. 

Next, 
suppose that $W_2 = \emptyset$. 
By \eqref{EQw2}, 
we have that $W_1 \neq \emptyset$. 
Then, 
it holds that 
\begin{align*}
\sum_{v \in W_1}d^-_A(v) 
&{}\ge \sum_{i=1}^k b_i(W_1) \quad (\because \mbox{\eqref{EQpackingdeg}}) \\
&{}>   \sum_{i=2}^k b_i(W_1) \quad (\because b_1(W_1) >0) \\
&{}\ge |\{i \in [k] \colon b_i(W) = b(W) \neq 0\}| \quad (\because b_1(W) \neq b(W))\\
&{}= g(W) \\
&{}= d^-_A(W), 
\end{align*}
implying that 
there exists an arc $(u,v) \in A$ such that $u \in W=W_0 \cup W_1$ and $v \in W_1$. 
\end{proof}

Let $a = (u,v) \in A$ be an arc in Claim \ref{CLa}. 
We then show that 
resetting 
\begin{align}
\label{EQresetA}
&{}A :=  A \setminus \{a\}, \\
\label{EQresetB}
&{}b_1(v) := b_1(v) - 1 
\end{align}
maintains \eqref{EQpackingdeg} and \eqref{EQpackingcut}. 
(This resetting amounts to augmenting $B_1$ by adding $a$.)

It is straightforward to see that \eqref{EQresetA} and \eqref{EQresetB} maintain 
\eqref{EQpackingdeg}. 
To prove that \eqref{EQresetA} and \eqref{EQresetB} maintain 
\eqref{EQpackingcut}, 
suppose to the contrary that 
$X \subseteq V$ comes to violate \eqref{EQpackingcut} after the resetting \eqref{EQresetA} and \eqref{EQresetB}. 

This violation implies that $d^-_A (X) = g(X)$ before the resetting, and 
$d^-_A(X)$ has decreased by one 
while $g(X)$ has remained unchanged by the resetting. 
It then follows that 
\begin{align}
\label{EQenter}
u \in V  \setminus X \quad \mbox{and} \quad v \in X.
\end{align} 
It also follows that $i=1$ does not contribute to $g(X)$, 
and hence before the resetting, 
it holds that 
\begin{align}
\label{EQU}
X \cap (V_0 \cup V_1) \neq \emptyset.
\end{align}  

By \eqref{EQenter}, 
we have that 
$u \in  W\setminus X$ and 
$v\in X \cap W$, 
and hence $\emptyset \neq X \cap W  \subsetneqq W$. 
Here 
we show that 
$X \cap W$ satisfies 
\eqref{EQw1}--\eqref{EQw3}, 
which contradicts the minimality of $W$. 

Before the resetting, 
it holds that 
\begin{align}
\label{EQsubsup1}
d^-_A(X \cap W) 
&{}\le 
d^-_A(X) + d^-_A(W) - d^-_A(X \cup W) 
\\
\label{EQsubsup2}
&{} \le g(X) + g(W) - g(X \cup W) 
\quad  \\
\label{EQsubsup3}
&{} \le g(X \cap W).
\end{align}
Indeed, 
\eqref{EQsubsup1} follows from 
submodularity of $d^-_A$. 
The inequality \eqref{EQsubsup2} follows from $d^-_A(X)=g(X)$, $d^-_A(W)=g(W)$, 
and $d^-_A(X \cup W) \ge g(X \cup W)$. 
Finally, 
\eqref{EQsubsup3} follows from Lemma \ref{LEMgsup}. 

Since 
$d^-_A(X \cap W) \ge g(X\cap W)$ by \eqref{EQpackingcut}, 
all inequalities \eqref{EQsubsup1}--\eqref{EQsubsup3} 
hold with equality, 
and hence
$d^-_A(X \cap W) 
= 
g(X \cap W)
$
holds before the resetting. 

Equality in \eqref{EQsubsup3} implies that 
$(X \cap W) \cap (V_0 \cup V_1) \neq \emptyset$. 
Indeed, 
we have that 
$W \cap (V_0 \cup V_1) \neq \emptyset$ because $u \in W \cap (V_0 \cup V_1)$, 
and hence $i=1$ does not contribute to $g(W)$.  
Combined with \eqref{EQU}, 
$i=1$ contributes to none of $g(X)$, $g(W)$, and $g(X \cup W)$. 
Thus, 
by the equality in \eqref{EQsubsup3}, 
$i=1$ does not contribute to $g(X \cap W)$ as well, 
and hence $(X \cap W) \cap (V_0 \cup V_1) \neq \emptyset$ must hold. 

We also have $(X \cap W)\setminus V_0 \neq \emptyset$, 
because $v \in (X \cap W)\setminus V_0$. 
Therefore, 
$X \cap W$ satisfies \eqref{EQw1}--\eqref{EQw3}, 
contradicting the minimality of $W$. 
Thus, 
we have finished proving that 
resetting of \eqref{EQresetA} and \eqref{EQresetB} maintains 
\eqref{EQpackingcut}. 

Now we can apply induction to obtain 
disjoint $b$-branchings $B_1,\ldots, B_k$ in the digraph $(V, A \setminus\{a\})$ 
such that 
$d^-_{B_1} = b_1 - \chi_v$ and $d^-_{B_i} = b_i$ for $i=2,\ldots, k$, 
where $\chi_v \in \ZZ\sp{V}$ is a vector defined by $\chi_v(v) = 1$ and $\chi_v(u) = 0$ for every $u \in V \setminus \{v\}$. 
We complete the proof by showing that $B_1 \cup \{a\}$ is a $b$-branching. 

In resetting, 
we always have $u \in W_0 \cup W_1$, 
which implies that 
the construction of $B_1$ begins with a vertex $r$ with $b_1(r) < b(r)$ 
and 
the component in $(V, B_1)$ containing $a$ includes $r$. 
Thus, 
no $X \subseteq V$ comes to satisfy $|B_1[X]| = b(X)$. 
\end{proof}

\subsection{Algorithm for finding disjoint $b$-branchings}

Let us discuss the algorithmic aspect of Theorem \ref{THMbbpacking}. 
First, 
we can determine whether \eqref{EQpackingdeg} and \eqref{EQpackingcut} hold in strongly polynomial time. 
Condition \eqref{EQpackingdeg} is clear. 
For \eqref{EQpackingcut}, 
we have that $d^-_A(X)$ is submodular and 
$g(X)$ 
is supermodular (Lemma \ref{LEMgsup}), 
and hence  
$d^-_A(X) - g(X)$ is submodular. 
Thus, 
we can determine whether there exists $X$ with $d^-_A(X) - g(X) < 0$ by 
submodular function minimization, 
which can be done in strongly polynomial time \cite{IFF01,LSW2015,Sch00}. 

Finding $b$-branchings $B_1,\ldots, B_k$ can also be done in strongly polynomial time. 
By the proof for Theorem \ref{THMbbpacking}, 
it suffices to find an arc $a \in A$ such that 
resetting \eqref{EQresetA} and \eqref{EQresetB} maintains \eqref{EQpackingcut}. 
This can be done by determining whether there exists $X$ with $d^-_A(X) - g(X) < 0$ 
after resetting \eqref{EQresetA} and \eqref{EQresetB} for each $a \in A$, 
i.e.,\ 
at most $|A|$ times of submodular function minimization \cite{IFF01,LSW2015,Sch00}. 

\begin{theorem}
Conditions \eqref{EQpackingdeg} and \eqref{EQpackingcut} can be checked in strongly polynomial time. 
Moreover, 
if \eqref{EQpackingdeg} and \eqref{EQpackingcut} hold, 
then disjoint $b$-branchings $B_1,\ldots, B_k$ such that $d_{B_i}^- = b_i$ for each $i \in [k]$ can be found in strongly polynomial time. 
\end{theorem}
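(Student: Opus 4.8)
The plan is to establish the two assertions of the theorem separately, both by reducing to submodular function minimization. For the first assertion, I would note that condition \eqref{EQpackingdeg} is checkable directly by computing $d^-_A(v)$ for each $v \in V$ and comparing with $\sum_{i=1}^k b_i(v)$, which takes $\mathrm{O}(|V| + |A|)$ time after a single scan of the arc set. For condition \eqref{EQpackingcut}, the key observation is that $d^-_A$ is a submodular set function on $2^V$ and $g$ is supermodular by Lemma \ref{LEMgsup}, so $h(X) := d^-_A(X) - g(X)$ is submodular. Hence \eqref{EQpackingcut} holds if and only if $\min_{X \subseteq V} h(X) \ge 0$, and this minimum can be computed in strongly polynomial time by any of the known submodular function minimization algorithms \cite{IFF01,LSW2015,Sch00}. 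One point worth spelling out is that $h$ must be given by a value oracle of polynomial size: evaluating $d^-_A(X)$ is immediate, and evaluating $g(X)$ reduces to computing $b(X)$ and $b_i(X)$ for each $i \in [k]$ and counting the indices $i$ with $b_i(X) = b(X) \neq 0$, which is $\mathrm{O}(k|V|)$ per call. (A minor subtlety: standard SFM routines minimize over all subsets including $\emptyset$; since $h(\emptyset) = 0$ this causes no trouble, and the case $X = \emptyset$ in \eqref{EQpackingcut} is trivially satisfied.)

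For the second assertion, I would turn the inductive proof of Theorem \ref{THMbbpacking} into an algorithm. The induction decreases $\sum_{i=1}^k b_i(V)$ by one at each step, so there are at most $\sum_{i=1}^k b_i(V) \le k \cdot b(V)$ iterations, which is polynomially bounded in the input size. At each iteration the proof guarantees the existence of an arc $a = (u,v)$ (as in Claim \ref{CLa}) such that resetting $A := A \setminus \{a\}$ and $b_1(v) := b_1(v) - 1$ — equivalently, after relabeling, $b_i(v) := b_i(v) - 1$ for whichever index $i$ is currently being augmented — preserves \eqref{EQpackingdeg} and \eqref{EQpackingcut}. To find such an arc algorithmically, I would simply iterate over all $a \in A$: for each candidate, perform the tentative resetting and test whether \eqref{EQpackingdeg} and \eqref{EQpackingcut} still hold, using the submodular-function-minimization subroutine from the first part; the proof of Theorem \ref{THMbbpacking} certifies that at least one candidate passes. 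Once an arc is found, commit the resetting, record $a$ as belonging to $B_i$, and recurse. After the process terminates (when all $b_i \equiv 0$), the recorded arc sets $B_1, \ldots, B_k$ are the desired disjoint $b$-branchings; the proof's concluding argument that $B_1 \cup \{a\}$ remains a $b$-branching guarantees correctness. The total running time is at most $\mathrm{O}(k \cdot b(V) \cdot |A|)$ calls to submodular function minimization, each of which is strongly polynomial, so the whole procedure is strongly polynomial.

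I would also remark that one need not re-scan all of $A$ from scratch at every iteration: it suffices to test candidate arcs $a = (u,v)$ with $u \in W_0 \cup W_1$ and $v \in W_1 \cup W_2$ for the current $W$ (the minimal violated-with-equality set from the proof), but since determining $W$ itself costs a submodular minimization and the crude bound above is already strongly polynomial, this refinement is inessential for the statement and I would mention it only in passing.

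The main obstacle, such as it is, is not conceptual but bookkeeping: one must make precise that the quantity controlling the number of iterations, $\sum_{i=1}^k b_i(V)$, is polynomially bounded in the encoding length of the input — this is immediate since each $b_i(v) \le b(v)$ and the $b(v)$ are part of the input — and one must confirm that each intermediate instance (with $A$ shrunk and one $b_i$ decremented) still satisfies the hypotheses $b_i(v) \le b(v)$ and $b_i \neq b$ required to invoke Theorem \ref{THMbbpacking}, which again follows directly from the proof of that theorem (the augmentation always acts at a vertex where $b_i$ is strictly below $b$, or more precisely the reset never creates $b_i = b$ because it only decreases coordinates). Beyond verifying these routine points, the argument is a direct algorithmic transcription of the existence proof.
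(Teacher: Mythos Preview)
Your proposal is correct and follows essentially the same approach as the paper: reduce the feasibility check to submodular function minimization on $d^-_A - g$, and then turn the inductive proof of Theorem~\ref{THMbbpacking} into an algorithm by testing each candidate arc with an SFM call before committing the reset. One small sharpening: your justification that $\sum_{i} b_i(V)$ is polynomially bounded ``since the $b(v)$ are part of the input'' only yields a weakly polynomial bound; for \emph{strongly} polynomial you should instead observe that summing condition~\eqref{EQpackingdeg} over $v \in V$ gives $\sum_{i=1}^k b_i(V) \le |A|$, so the number of outer iterations is at most $|A|$.
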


Furthermore, 
if an arc-weight vector $w \in \RR_+^A$ is given, 
we can find disjoint $b$-branchings $B_1,\ldots, B_k$ minimizing 
$w(B_1)+ \cdots +w(B_k)$ in strongly polynomial time. 
Indeed, 
conditions \eqref{EQpackingdeg} and \eqref{EQpackingcut} derive a totally dual integral system 
which determines a \emph{submodular flow polyhedron}. 
A set family $\C \subseteq 2^V$ is called a 
\emph{crossing family} if,  
for each $X,Y \in \C$ with
$X \cup Y \neq V$ 
and $X \cap Y \neq \emptyset$, 
it holds that 
$X \cup Y, X\cap Y \in \C$. 
A function $f: \C \to \RR$ defined on a 
crossing family $\C \subseteq V$ is called 
\emph{crossing submodular} if, 
for each 
$X,Y \in \C$ with 
$X \cup Y \neq V$ 
and 
$X \cap Y \neq \emptyset$, 
it holds that 
$f(X) + f(Y) \ge f(X\cup Y) + f(X\cap Y)$. 
A function $f$ is \emph{crossing supermodular} if $-f$ is crossing submodular. 
A \emph{submodular flow polyhedron} is a polyhedron described as 
\begin{alignat*}{2}
&{}x(\delta_A^-(X))-x(\delta_A^+(X)) \le f(X)  {}&\quad&{}
(X \in \C)
, \\
&{}l(a) \le x(a) \le u(a) {}&{}
 \quad &{}(a \in A)
\end{alignat*}
by some digraph $(V,A)$, 
crossing 
submodular function $f$ 
on a crossing family $\C \subseteq 2^V$, 
and 
vectors $l,u \in \RR^A$, 
where $\delta_A^+(X)$ denotes the set of arcs 
in $A$ from $X$ to $V \setminus X$.

\begin{lemma}[\cite{Sch84}]
\label{LEMsf}
For a digraph $D=(V,A)$, 
let 
$f\colon 2^V \to \RR$ be a 
crossing 
supermodular function 
on $\C \subseteq 2^V$ 
and $u \in \RR^A$. 
Then, 
a polyhedron determined by 
\begin{alignat*}{2}
&x(\delta^-_A (X)) \ge f(X) \quad &&{}
(X \in \C),\\
& 0 \le x(a) \le u(a) \quad &&{}(a \in A)
\end{alignat*}
is a submodular flow polyhedron.
\end{lemma}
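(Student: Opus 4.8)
The plan is to derive Lemma~\ref{LEMsf} from the standard characterization of submodular flow polyhedra, namely that a polyhedron is a submodular flow polyhedron precisely when it can be written in the form
\begin{align*}
&{}x(\delta_A^-(X)) - x(\delta_A^+(X)) \le f'(X) \quad (X \in \C'), \\
&{}l(a) \le x(a) \le u(a) \quad (a \in A)
\end{align*}
for some crossing family $\C'$ and crossing submodular $f'$. The idea is simply to rewrite each lower-bound constraint $x(\delta_A^-(X)) \ge f(X)$ as an upper-bound constraint on the complementary cut. First I would observe that for any $X \subseteq V$ one has $\delta_A^-(X) = \delta_A^+(V \setminus X)$ and $\delta_A^+(X) = \delta_A^-(V \setminus X)$, so $x(\delta_A^-(X)) = x(\delta_A^-(V\setminus X)) - x(\delta_A^+(V\setminus X)) + x(\delta_A^+(V\setminus X)) - \cdots$; more directly, $x(\delta_A^-(X)) - x(\delta_A^+(X)) = -\bigl(x(\delta_A^-(V\setminus X)) - x(\delta_A^+(V\setminus X))\bigr)$ is the net in-flow identity, so the constraint $x(\delta_A^-(X)) \ge f(X)$ is meaningful only after we also record the flow on $\delta_A^+(X)$. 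The clean route is: introduce for each arc $a$ the trivial bounds $0 \le x(a) \le u(a)$ (already present), and note that $x(\delta_A^-(X)) \ge f(X)$ is equivalent, given $x \ge 0$, to nothing by itself on the net flow — so instead I would use the transformation to the complement set $Y = V \setminus X$.

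Concretely, the key step is: define $\C' = \{\, V \setminus X : X \in \C \,\}$ and $f'(Y) = u(\delta_A^+(Y)) - f(V \setminus Y)$ for $Y \in \C'$ (using $\delta_A^+(Y) = \delta_A^-(V\setminus Y)$). Then since $0 \le x \le u$, the inequality $x(\delta_A^-(X)) \ge f(X)$ with $X = V \setminus Y$ becomes $x(\delta_A^-(V\setminus Y)) \ge f(V\setminus Y)$, i.e. $x(\delta_A^+(Y)) \ge f(V\setminus Y)$, i.e. $-x(\delta_A^+(Y)) \le -f(V\setminus Y)$, and adding $x(\delta_A^-(Y)) \le u(\delta_A^-(Y))$ — wait, this needs care. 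The cleaner identity I would actually invoke is $x(\delta_A^-(X)) - x(\delta_A^+(X)) = -x(\delta_A^-(V\setminus X)) + x(\delta_A^+(V\setminus X))$, valid for every $x$ by counting each arc's contribution (arcs inside $X$ or inside $V\setminus X$ contribute $0$ to both sides; arcs crossing contribute with matching signs). Hence $x(\delta_A^-(X)) \ge f(X)$ is equivalent to
\begin{align*}
x(\delta_A^-(V\setminus X)) - x(\delta_A^+(V\setminus X)) \le -f(X) + \bigl(x(\delta_A^+(X)) - x(\delta_A^+(X))\bigr),
\end{align*}
which is not quite a pure cut function — so the honest approach is to keep $x(\delta_A^+(X))$ on the left by noting $x(\delta_A^-(X)) \ge f(X) \iff x(\delta_A^-(X)) - x(\delta_A^+(X)) \ge f(X) - x(\delta_A^+(X)) \ge f(X) - u(\delta_A^+(X))$; this last is only a relaxation. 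Thus the correct move is the complementation one: set $g(Y) := u(\delta_A^-(Y)) - f(V \setminus Y)$ on $\C' = \{V\setminus X : X \in \C\}$ and check $x(\delta_A^-(Y)) - x(\delta_A^+(Y)) \le g(Y)$ is equivalent to the original constraint using $x(\delta_A^+(Y)) = x(\delta_A^-(V\setminus Y)) \ge f(V\setminus Y)$ becomes, via $x(\delta_A^-(Y)) \le u(\delta_A^-(Y))$, exactly the stated inequality only when that bound is tight — so actually I would replace the upper bound $u(\delta_A^-(Y))$ term properly and then verify.

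The main technical steps, in order, are: (i) verify that $\C'$ is a crossing family — immediate, since complementation maps the pair condition $X\cup Y \ne V,\ X \cap Y \ne \emptyset$ to itself (it swaps union and intersection and swaps "$\ne V$" with "$\ne \emptyset$"); (ii) verify that $f'$ is crossing submodular — this follows because $Y \mapsto u(\delta_A^+(Y))$ is submodular (standard: cut functions of digraphs are submodular), $Y \mapsto f(V\setminus Y)$ is crossing supermodular (complementation of a crossing supermodular function on $\C$ restricted to $\C'$), and the difference of a submodular and a supermodular function is submodular, all restricted to the crossing family $\C'$; (iii) verify the constraints coincide: for $X \in \C$ and $Y = V \setminus X$, show $\{\, x : 0 \le x \le u,\ x(\delta_A^-(X)) \ge f(X)\,\} = \{\, x : 0 \le x \le u,\ x(\delta_A^-(Y)) - x(\delta_A^+(Y)) \le f'(Y)\,\}$, which reduces to the arithmetic identity $x(\delta_A^-(Y)) - x(\delta_A^+(Y)) = x(\delta_A^-(Y)) - x(\delta_A^-(X))$ combined with $x(\delta_A^-(Y)) \le u(\delta_A^-(Y)) = u(\delta_A^+(X))$; (iv) conclude by citing the defining form of a submodular flow polyhedron. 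The main obstacle I anticipate is step (iii): getting the box constraints to interact correctly with the cut-complementation so that the reformulated inequality is genuinely \emph{equivalent} (not merely a relaxation) to $x(\delta_A^-(X)) \ge f(X)$ — this is exactly the point where one must use $x(a) \le u(a)$ on the arcs of $\delta_A^+(X)$, and where the choice $f'(Y) = u(\delta_A^+(X)) - f(X)$ (with $X = V\setminus Y$) is forced. Steps (i) and (ii) are routine once (iii) pins down the right $f'$.
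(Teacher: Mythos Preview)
First, note that the paper does not prove this lemma; it is quoted with a citation to Schrijver~\cite{Sch84} and used as a black box. So there is no in-paper argument to compare against, and your attempt must be judged on its own.

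Your plan has a genuine gap precisely at the point you yourself flag as the ``main obstacle,'' and it is not closable along the line you propose. With $Y=V\setminus X$ and $f'(Y)=u(\delta_A^+(X))-f(X)=u(\delta_A^-(Y))-f(V\setminus Y)$, the constraint
\[
x(\delta_A^-(Y))-x(\delta_A^+(Y))\le f'(Y)
\]
unwinds (via $\delta_A^-(Y)=\delta_A^+(X)$ and $\delta_A^+(Y)=\delta_A^-(X)$) to
\[
x(\delta_A^-(X))\ \ge\ f(X)\ -\ \bigl(u(\delta_A^+(X))-x(\delta_A^+(X))\bigr),
\]
which, under $x\le u$, is strictly \emph{weaker} than $x(\delta_A^-(X))\ge f(X)$ whenever $x$ is below $u$ on some arc of $\delta_A^+(X)$. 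Concretely: take $V=\{1,2\}$, $A=\{(1,2),(2,1)\}$, $u\equiv 1$, $\C=\{\{2\}\}$, $f(\{2\})=1$. The polyhedron in the lemma is $\{(x_{12},x_{21}):x_{12}=1,\ 0\le x_{21}\le 1\}$, while your transformed system yields $\{(x_{12},x_{21})\in[0,1]^2:x_{21}\le x_{12}\}$; the point $(\tfrac12,0)$ lies in the second but not the first. So step~(iii) fails: the reformulation is a strict relaxation, not an equivalence, and no amount of ``using $x(a)\le u(a)$ on $\delta_A^+(X)$'' repairs this, because that inequality is exactly what produces the slack term above.

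The underlying issue is that a pure in-cut lower bound $x(\delta_A^-(X))\ge f(X)$ is not a net-flow inequality on $D$, and complementation alone cannot turn it into one while keeping the same digraph and the same box $0\le x\le u$. Schrijver's treatment in~\cite{Sch84} does not proceed by your complementation; his framework establishes total dual integrality for systems built from in-cut (or out-cut) constraints with crossing super/submodular right-hand sides directly. If you want a self-contained reduction to the Edmonds--Giles form given in the paper, you will need to modify the data (e.g.\ pass to an auxiliary digraph, or change the box $l,u$), not merely relabel sets by complements.
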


By Lemma \ref{LEMsf}, the linear inequality system \eqref{EQpackingdeg} and \eqref{EQpackingcut} determines a submodular flow polyhedron.
Indeed, we can define a crossing supermodular function $f\colon 2^V \to \RR$ by
\[
f(X) =
\begin{cases}
\displaystyle \sum_{i=1}^k b_i (v) & \mbox{($X = \{v\}$ for some $v\in V$)},\\
g(X) & (\mbox{otherwise}). 
\end{cases}
\]
Since a submodular flow polyherdron is totally dual integral \cite{EG77},
an arc subset $B \subseteq A$ with \eqref{EQpackingdeg} and \eqref{EQpackingcut} 
minimizing $w(B)$ can be found by optimization over a 
submodular flow polyhedron, 
which can be done in strongly polynomial time \cite{FIM02,FT87,IMS00,IMS03}. 
After that, 
we can partition $B$ into $b$-branchings $B_1,\ldots, B_k$ with $d^-_{B_i}=b_i$ ($i \in [k]$) 
in the same manner as above. 

\begin{theorem}
If \eqref{EQpackingdeg} and \eqref{EQpackingcut} hold, 
then disjoint $b$-branchings $B_1,\ldots, B_k$ such that $d_{B_i}^- = b_i$ for each $i \in [k]$ 
minimizing $w(B_1)+ \cdots +w(B_k)$ can be found in strongly polynomial time. 
\end{theorem}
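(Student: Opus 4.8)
The plan is to collapse the weighted packing question into a single minimum-weight arc-selection problem, to solve that by submodular flow optimization, and then to split the resulting arc set using the constructive version of Theorem~\ref{THMbbpacking}. Concretely, I would first prove that
\begin{align*}
\min\Bigl\{\textstyle\sum_{i=1}^{k} w(B_i)\ :\ B_1,\dots,B_k\ \text{disjoint $b$-branchings},\ d^-_{B_i}=b_i\ (i\in[k])\Bigr\}
\end{align*}
equals $\min\{w(B) : B\subseteq A\ \text{satisfies \eqref{EQpackingdeg} and \eqref{EQpackingcut}}\}$. One inequality is immediate: for any feasible packing the union $B=\bigcup_i B_i$ has $d^-_B=\sum_i b_i$, satisfies \eqref{EQpackingcut} by the necessity part of Theorem~\ref{THMbbpacking}, and has $w(B)=\sum_i w(B_i)$ since the $B_i$ are disjoint. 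For the reverse inequality, take any $B$ attaining the right-hand minimum; then $(V,B)$ satisfies \eqref{EQpackingdeg} and \eqref{EQpackingcut}, so by Theorem~\ref{THMbbpacking} it contains disjoint $b$-branchings $B_1,\dots,B_k$ with $d^-_{B_i}=b_i$, and $w\ge 0$ gives $\sum_i w(B_i)=w(\bigcup_i B_i)\le w(B)$. Thus the two minima coincide, and an optimal packing is obtained by computing an optimal $B$ and then running the strongly polynomial algorithm behind Theorem~\ref{THMbbpacking} on $(V,B)$.

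It therefore remains to compute a minimum-weight $B\subseteq A$ satisfying \eqref{EQpackingdeg} and \eqref{EQpackingcut} in strongly polynomial time, and here I would use the submodular flow formulation indicated before the statement. Read in the variable $x\in[0,1]^A$, the feasible region is $\{x : x(\delta^-_A(X))\ge f(X)\ (X\subseteq V),\ 0\le x\le u\}$ with $u\equiv 1$ and $f$ the crossing supermodular function displayed in the excerpt; here the singleton value $f(\{v\})=\sum_i b_i(v)$ absorbs \eqref{EQpackingdeg} because $\sum_i b_i(v)\ge g(\{v\})$ (as $b(v)\ge 1$). By Lemma~\ref{LEMsf} this is a submodular flow polyhedron, which, being defined by integral data, is an integral polyhedron \cite{EG77}; since it is bounded (it lies in $[0,1]^A$) and nonempty (the hypotheses make $\chi_A$ feasible), the minimum of $w$ over it is attained at a $0/1$ vertex, i.e.\ at the incidence vector of an arc set $B$ with \eqref{EQpackingdeg} and \eqref{EQpackingcut}, and such an optimum is computable in strongly polynomial time by the known submodular flow algorithms \cite{FIM02,FT87,IMS00,IMS03}. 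Chaining this with the reduction above, and noting that checking \eqref{EQpackingcut} on $(V,B)$ and running the algorithm of Theorem~\ref{THMbbpacking} are themselves polynomial, yields the claimed strongly polynomial procedure.

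The step I expect to require the most care is the verification that $f$, equivalently the inequality system \eqref{EQpackingdeg}--\eqref{EQpackingcut}, really does define a submodular flow polyhedron in the sense demanded by Lemma~\ref{LEMsf}: one has to check crossing supermodularity of the governing set function on a suitable crossing family, and the delicate case is a crossing pair $X,Y$ whose intersection is a singleton, where the active bound is $\sum_i b_i(v)$ rather than $g(\{v\})$; if needed, this is handled by a vertex-splitting reformulation that keeps the in-degree constraints separate from the supermodular cut constraints. Everything else --- total dual integrality of submodular flow polyhedra, the strongly polynomial submodular flow algorithms, and the already established unweighted packing algorithm of Theorem~\ref{THMbbpacking} --- is available off the shelf.
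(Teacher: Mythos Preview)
Your proposal is correct and follows essentially the same route as the paper: find a minimum-weight arc set $B$ satisfying \eqref{EQpackingdeg}--\eqref{EQpackingcut} by casting it as a submodular flow problem via Lemma~\ref{LEMsf} with the displayed function $f$, then split $B$ into $B_1,\dots,B_k$ using the constructive proof of Theorem~\ref{THMbbpacking}. You supply more detail than the paper does---in particular the explicit equivalence between $\min\sum_i w(B_i)$ and $\min w(B)$, and the identification of the singleton-intersection case in the crossing supermodularity check (which indeed goes through because $g(\{v\})\le\sum_i b_i(v)$, so no vertex splitting is needed)---but the underlying argument is the same.
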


\subsection{Integer decomposition property of the $b$-branching polytope}

In this subsection 
we show another consequence of 
Theorem \ref{THMbbpacking}:  
the integer decomposition property of the $b$-branching polytope. 
First, 
Theorem \ref{THMbbpacking} 
leads to the following min-max relation on covering by $b$-branchings. 
This is an extension of Theorem \ref{THMbcovering}, 
the theorem on covering by branchings \cite{Fra79,MG86}. 

\begin{corollary}
\label{CORbbcover}
Let $D=(V,A)$ be a digraph, 
$b\in \ZZ_{++}^V$ be a positive integer vector on $V$, 
and 
$k$ be a positive integer. 
Then, 
the arc set $A$ can be covered by $k$ $b$-branchings if and only if 
\begin{align}
\label{EQcover1}
&{}d^-_A(v) \le k \cdot b(v) \quad (v \in V), \\
\label{EQcover2}
&{}|A[X]| \le k (b(X)-1) \quad (\emptyset \neq X \subseteq V). 
\end{align}
\end{corollary}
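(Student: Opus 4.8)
The plan is to derive Corollary~\ref{CORbbcover} from Theorem~\ref{THMbbpacking} by the same route that Theorem~\ref{THMbcovering} follows from Theorem~\ref{THMbpacking}. Necessity is immediate: if $A = B_1 \cup \cdots \cup B_k$ with each $B_i$ a $b$-branching, then summing $d^-_{B_i}(v) \le b(v)$ over $i$ gives \eqref{EQcover1}, and summing $|B_i[X]| \le b(X)-1$ over $i$ gives \eqref{EQcover2}. So the work is all in sufficiency.

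For sufficiency, I would try to apply Theorem~\ref{THMbbpacking} with a judicious choice of the vectors $b_1,\ldots,b_k$. The natural idea is to split the indegrees of $A$ as evenly as possible: for each $v \in V$, since $d^-_A(v) \le k\,b(v)$ by \eqref{EQcover1}, we can choose nonnegative integers $b_1(v),\ldots,b_k(v)$ with $\sum_{i=1}^k b_i(v) = d^-_A(v)$ and each $b_i(v) \le b(v)$; distributing the ``load'' round-robin so that the $b_i(v)$ differ by at most one is the cleanest concrete choice. Then \eqref{EQpackingdeg} holds with equality at every vertex. If in addition each $b_i \neq b$, then Theorem~\ref{THMbbpacking} yields disjoint $b$-branchings $B_1,\ldots,B_k$ with $d^-_{B_i}=b_i$; since $\sum_i d^-_{B_i}(v) = d^-_A(v)$ for all $v$, the disjoint union $B_1 \cup \cdots \cup B_k$ uses every arc once, i.e.\ it covers (in fact partitions) $A$.

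Two points need care, and these are where the real obstacle lies. First, one must verify the cut condition \eqref{EQpackingcut}: $d^-_A(X) \ge g(X)$ for all $X \subseteq V$. With the even split, $b_i(X) = \sum_{v\in X} b_i(v)$, and $g(X)$ counts those $i$ with $b_i(X) = b(X) \neq 0$. The bound $|A[X]| \le k(b(X)-1)$ from \eqref{EQcover2} should be the lever: combining $d^-_A(X) = \sum_{v\in X} d^-_A(v) - |A[X]|$ with $\sum_{v\in X} d^-_A(v) = \sum_{i=1}^k b_i(X)$ gives $d^-_A(X) = \sum_{i=1}^k b_i(X) - |A[X]| \ge \sum_{i=1}^k b_i(X) - k(b(X)-1) = \sum_{i=1}^k (b_i(X) - b(X) + 1)$. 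Now each term $b_i(X) - b(X) + 1$ is at most $1$, and it equals $1$ exactly when $b_i(X) = b(X)$; the terms with $b_i(X) < b(X)$ contribute at most $0$ (indeed they may be negative, which only helps). So $d^-_A(X) \ge |\{i : b_i(X) = b(X)\}| \ge g(X)$, as needed --- I expect this inequality chain to go through cleanly, though one must double-check the handling of the ``$\neq 0$'' clause in the definition of $g$ (if $b(X)=0$ then $b(X)-1 < 0$ and \eqref{EQcover2} forces $A[X] = \emptyset$, a degenerate case; more relevantly, if $b(X) \ge 1$ the argument above is fine).

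The second, more delicate point is the requirement $b_i \neq b$ for every $i \in [k]$, which Theorem~\ref{THMbbpacking} demands. With a naive even split this can fail: if $d^-_A(v) = k\,b(v)$ for all $v \in V$ simultaneously, then every $b_i = b$. The standard fix, which I would adopt, is to observe that in that situation \eqref{EQcover2} with $X = V$ reads $|A| = |A[V]| \le k(b(V)-1) < k\,b(V) = \sum_v d^-_A(v) = |A|$, a contradiction --- so $d^-_A(v) < k\,b(v)$ for at least one vertex, hence we can always arrange the split so that at least one coordinate of each $b_i$ is strictly below $b$; more carefully, one perturbs the round-robin assignment so that no $b_i$ equals $b$ on all of $V$, using that $\sum_i b_i = d^-_A \lneq k\,b$ (strict in at least one coordinate). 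Handling this bookkeeping --- showing one can always choose the splitting with every $b_i \ne b$ while keeping \eqref{EQpackingdeg} and \eqref{EQpackingcut} --- is the main obstacle; everything else is a direct substitution into Theorem~\ref{THMbbpacking}. Once the $b_i$ are in hand, Theorem~\ref{THMbbpacking} finishes the proof.
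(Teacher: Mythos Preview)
Your inequality chain for \eqref{EQpackingcut} is broken. From
\[
d^-_A(X) \;\ge\; \sum_{i=1}^k \bigl(b_i(X) - b(X) + 1\bigr)
\]
you conclude $d^-_A(X) \ge |\{i : b_i(X)=b(X)\}|$, saying that the terms with $b_i(X)<b(X)$ ``may be negative, which only helps.'' It does not help: those negative terms pull the sum \emph{down}, so what you have shown is only $d^-_A(X) \ge |S| + (\text{something}\le 0)$, where $S=\{i:b_i(X)=b(X)\}$. That is the wrong direction. A concrete failure: take $b\equiv 1$, $k=2$, $V=\{u,v,w\}$, $A=\{(u,v),(v,u)\}$. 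Conditions \eqref{EQcover1}--\eqref{EQcover2} hold. A round-robin split can give $b_1=(1,1,0)$, $b_2=(0,0,0)$; then for $X=\{u,v\}$ one has $g(X)=1$ but $d^-_A(X)=0$, so \eqref{EQpackingcut} fails, and indeed no $b$-branching realises $d^-=b_1$. A correct cover exists ($\{(u,v)\}$ and $\{(v,u)\}$), but your split does not find it.

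The paper avoids this difficulty entirely rather than repairing the split. It adds a new vertex $r$ and $k\cdot b(v)-d^-_A(v)$ parallel arcs from $r$ to each $v$, so that in the enlarged digraph $D'$ every $v\in V$ has indegree exactly $k\cdot b(v)$. It then sets $b'(r)=1$, $b'_0(r)=0$, $b'=b'_0=b$ on $V$, and takes all $b_i=b'_0$. Now $b_i(X)=b(X)$ for every $X\subseteq V$, so your computation collapses to $d^-_{A'}(X)\ge k = g(X)$ with no negative terms, and the condition $b_i\neq b'$ is automatic because they differ at $r$. The resulting $b'$-branchings partition $A'$ exactly (by counting), and their restrictions to $A$ partition $A$. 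So the dummy-root trick is not cosmetic: it is what makes the verification of \eqref{EQpackingcut} go through without any delicate choice of split.
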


\begin{proof}
Necessity is obvious. 
To prove sufficiency, 
construct a new digraph $D'=(V',A')$ in the following manner. 
The vertex set $V'$ is obtained from $V$ by adding a new vertex $r$. 
The arc set $A'$ is obtained from $A$ by adding $k\cdot b(v) - d^-_A(v)$ parallel arcs 
from $r$ to $v$ for each $v \in V$. 
Note that $k\cdot b(v) - d^-_A(v)$ is nonnegative by \eqref{EQcover1}. 

Then, 
in the digraph $D' = (V',A')$, 
it holds that 
\begin{align}
\label{EQcover1p}
d^-_{A'}(v) {}&{}= k \cdot b(v) \quad (v \in V), \\
d^-_{A'}(X) 
{}&{}= \sum_{v \in X}d^-_{A'}(v) - |A[X]| \notag\\
\label{EQcover2p}
{}&{}\ge  \sum_{v \in X}k \cdot b(v) - k(b(X)-1) = k \quad (\emptyset \neq X \subseteq V).  
\end{align}
Now define vectors $b', b_0' \in \ZZ^{A'}$ by 
\begin{align*}
&
b'(v) = 
\begin{cases}
b(v) & (v \in V), \\
1 & (v =r), 
\end{cases}
&&
b'_0(v) = 
\begin{cases}
b(v) & (v \in V), \\
0 & (v =r).
\end{cases}
\end{align*}
By \eqref{EQcover1p} and \eqref{EQcover2p}, 
we can apply Theorem \ref{THMbbpacking} to $D'$ and obtain 
$k$ disjoint $b'$-branchings $B_1',\ldots B_k'$ in $D'$ 
satisfying $d^-_{B_i'} = b_0'$ for each $i \in [k]$. 
It then follows that $|B_i'| = b_0'(V)=b(V)$ for each $i \in [k]$. 
Since $$|A'|= |A| + \sum_{v \in V} (k \cdot b(v)-d^-_A(v)) = |A| + (k\cdot b(V) - |A|)=k \cdot b(V),$$
$\{B_1',\ldots , B_k'\}$ is a partition of $A'$. 
Thus, 
by 
restricting $B_1',\ldots B_k'$ to $A$, 
we obtain 
$b$-branchings $B_1,\ldots, B_k$ partitioning $A$. 
\end{proof}

The integer decomposition property of the $b$-branching polytope 
is a direct consequence of 
Corollary \ref{CORbbcover}.  

\begin{corollary}
The $b$-branching polytope has the integer decomposition property. 
\end{corollary}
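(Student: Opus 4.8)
The plan is to derive the integer decomposition property of the $b$-branching polytope directly from Corollary~\ref{CORbbcover} together with Theorem~\ref{THMpolytope}, mimicking the classical argument that derives the integer decomposition property of the branching polytope from Theorem~\ref{THMbcovering} and Corollary~\ref{CORbTDI}. Let $P$ denote the $b$-branching polytope, which by Theorem~\ref{THMpolytope} is exactly the polytope described by \eqref{EQlp1}--\eqref{EQlp3}. Fix a positive integer $k$ and an integer vector $x \in kP$. First I would observe that $x$ is a nonnegative integer vector: from $x \in kP$ we have $x = kx'$ for some $x' \in P$ with $0 \le x'(a) \le 1$, so each coordinate $x(a)$ lies in $\{0,1,\ldots,k\}$. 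The natural object to feed into Corollary~\ref{CORbbcover} is the multidigraph $D_x = (V, A_x)$ obtained from $D$ by replacing each arc $a \in A$ with $x(a)$ parallel copies.

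Next I would check that $D_x$ satisfies the hypotheses \eqref{EQcover1} and \eqref{EQcover2} of Corollary~\ref{CORbbcover}. For a vertex $v$, $d^-_{A_x}(v) = x(\delta^-_A(v))$, and since $x' = x/k$ satisfies \eqref{EQlp1} we get $x(\delta^-_A(v)) \le k\cdot b(v)$, which is \eqref{EQcover1}. Similarly, for $\emptyset \neq X \subseteq V$ we have $|A_x[X]| = x(A[X]) \le k(b(X)-1)$ by \eqref{EQlp2} applied to $x'$, which is \eqref{EQcover2}. Hence Corollary~\ref{CORbbcover} applies and yields $b$-branchings $B_1,\ldots,B_k$ (in the multidigraph $D_x$) that partition $A_x$. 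For each $j \in [k]$, let $x_j \in \ZZ_+^A$ be defined by letting $x_j(a)$ be the number of copies of $a$ used in $B_j$; then each $B_j$, viewed back in $D$, corresponds to the $\{0,1\}$-vector that is the characteristic vector of an arc subset, and since $B_j$ is a $b$-branching in $D_x$ it is a $b$-branching in $D$ as well, so $x_j \in P$ is an integer vector. Because $\{B_1,\ldots,B_k\}$ partitions $A_x$, summing the multiplicities gives $\sum_{j=1}^k x_j = x$. This exhibits $x$ as a sum of $k$ integer vectors of $P$, which is precisely the integer decomposition property.

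The one point requiring a little care is whether the $b$-branchings $B_j$ produced by Corollary~\ref{CORbbcover} in the multidigraph $D_x$ really correspond to $\{0,1\}$-valued vectors on $A$, i.e.\ that no $B_j$ contains two parallel copies of the same arc of $D$. This is automatic from the definition of a $b$-branching: two parallel copies of an arc $a = (u,v)$ together with the single-vertex constraint would force $b(\{v\}) \ge 2$ (fine) but, more to the point, they would induce $|B_j[\{u,v\}]| \ge 2 = b(\{u,v\})$ whenever... in fact the cleaner observation is that $\{B_1,\dots,B_k\}$ partitioning $A_x$ means $\sum_j x_j = x$ with each $x_j(a) \le x(a)$, but parallel copies could still be distributed within one $B_j$; to rule this out, note that a $b$-branching is acyclic in the sense of the sparsity matroid \eqref{EQsparsity}, and two parallel arcs between $u$ and $v$ form a set $F$ with $|F[\{u,v\}]| = 2$ while $b(\{u,v\}) - 1 \ge 1$, so this is only a contradiction when $b(\{u,v\}) = 2$; in general one should instead simply note that whenever $B_j$ is a $b$-branching in $D_x$ it is in particular independent in the corresponding degree matroid, and one may reroute or it is harmless since $x_j$ defined as the multiplicity vector is still a point of $P$ as $0 \le x_j \le x \le k\mathbf{1}$ componentwise would not suffice — the correct and simplest route is to invoke that a maximal $b$-branching has at most $b(X)-1 < b(X)$ arcs induced by any $X$, and parallel arcs contribute to the same $X$; I would present this as the routine verification that the multiplicity vectors $x_j$ obtained from a partition into $b$-branchings are genuinely in $P$, which follows because each $B_j$ satisfies \eqref{EQlp1}--\eqref{EQlp3} in $D_x$ and these constraints descend to $D$ under the copy-merging map. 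Apart from this bookkeeping, the proof is a direct application of Corollary~\ref{CORbbcover}, so I would keep it to a few lines.
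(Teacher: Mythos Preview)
Your approach is exactly the paper's: build the multidigraph $D_x$ with $x(a)$ parallel copies of each arc, verify \eqref{EQcover1} and \eqref{EQcover2}, apply Corollary~\ref{CORbbcover} to partition $A_x$ into $k$ $b$-branchings, and express $x$ as the sum of the resulting multiplicity vectors. The paper's proof consists of precisely these few lines.

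The concern you spend your last paragraph on---that some $B_j$ might contain two parallel copies of the same original arc, so that its multiplicity vector $x_j$ violates $x_j(a)\le 1$ from \eqref{EQlp3} and hence $x_j\notin P$---is real, and your attempted resolution does not close it: constraints \eqref{EQlp1} and \eqref{EQlp2} do descend under the copy-merging map, but \eqref{EQlp3} does not. You correctly observe along the way that two parallel copies of $(u,v)$ violate the sparsity constraint \eqref{EQsparsity} only when $b(u)+b(v)=2$, so for general $b$ a $b$-branching in $D_x$ may legitimately contain parallel arcs (for instance, two copies of a single arc $(u,v)$ with $b(u)=b(v)=2$ already form a $b$-branching in $D_x$). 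The paper's own proof simply writes ``In other words, $x$ is the sum of $k$ integer vectors in $P$'' at exactly this point and moves on, so it does not address the issue either. Your argument therefore matches the paper's line for line; you have just been more explicit than the paper about a step that both versions leave unjustified.
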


\begin{proof}
Denote the $b$-branching polytope by $P$. 
Recall that $P$ is determined by 
\eqref{EQlp1}--\eqref{EQlp3} (Theorem \ref{THMpolytope}). 
Let $k$ be a positive integer and 
$x \in \ZZ\sp{A}$  be an integer vector in $kP$. 
It follows from $x \in kP$ that 
\begin{alignat*}{2}
{}&{}x (\delta^-(v)) \le k\cdot b(v) \quad {}&&{}(v \in V), \\
{}&{}x(A[X]) \le k(b(X) - 1) \quad {}&&{}(\emptyset \neq X \subseteq V), \\
{}&{}0 \le x(a) \le k \quad {}&&{}(a \in A). 
\end{alignat*}
Now consider an arc set $A_x$ consisting of 
$x(a)$ arcs parallel to $a$ for each $a \in A$. 
It is straightforward to see that 
\eqref{EQcover1} and \eqref{EQcover2} hold when $A=A_x$. 
Thus, 
by Corollary \ref{CORbbcover}, 
$A_x$ can be covered by $k$ $b$-branchings. 
In other words, 
$x$ is the sum of $k$ integer vectors in $P$, 
implying the integer decomposition property of $P$. 
\end{proof}

\section{Matroid-restricted $b$-branchings}
\label{SECmatroid}

In this section, 
we deal with \emph{matroid-restricted $b$-branching}, 
which further generalizes $b$-branchings. 
Let $D=(V,A)$ be a digraph and $b \in \ZZ_{++}^V$ be a positive integer vector on $V$. 
For each vertex $v \in V$, 
a matroid $\Minv=(\delta^-(v), \Iinv)$ with rank $b(v)$ is attached. 
We denote the direct sum of $\Minv$ for every $v \in V$ by $\MinV = (A, \IinV)$. 
Now an arc set $F \subseteq A$ is an \emph{$\MinV$-restricted $b$-branching} 
if $F \in \IinV \cap \Isp$. 
Note that a $b$-branching is a special case where 
$\Minv$ is a uniform matroid for each $v\in V$. 

Here we provide a multi-phase greedy algorithm for finding a maximum-weight $\MinV$-restricted $b$-branching 
by extending \textsc{Algorithm \bb}. 

\begin{description}
\item[Algorithm \mrbb{}.]
\item[Input.]
A digraph $D=(V,A)$, 
vectors $b :\ZZ_{++}^V$ and 
$w\in \RR_+^A$, 
and 
matroids $\Minv=(\delta^-(v), \Iinv)$ with rank $b(v)$ for each $v\in V$. 

\item[Output.]
	An $\MinV$-restricted $b$-branching $F \subseteq A$ maximizing $w(F)$. 
\item[Step 1.]
	Set $i:=0$, 
	$\Di{0} := D$, 
	$\bi{0} = b$,  
	and 
	$\wi{0} := w$. 
\item[Step 2.]
	For each $v \in \Vi{i}$, 
	define a matroid $\Mvi{i}=(\delta_{\Ai{i+1}}^-(v), \Ivi{i+1})$ as 
	$\Mvi{i}$ if $v \in V$, 
	and 
	a uniform matroid of rank $1$ otherwise.  
	Let $\MVi{i} = (\Ai{i}, \IVi{i})$ be the direct sum of $\Mvi{i}$ for every $v \in \Vi{i}$. 
	Then, 
	find $\Fi{i} \in \IVi{i}$ maximizing $\wi{i}(\Fi{i})$. 
\item[Step 3.]
	If $(\Vi{i},\Fi{i})$ has a strong component $X$ such that 
	\begin{align}
	\label{EQdependM}
	|\Fi{i}[X]| = \bi{i}(X), 
	\end{align}
	then 
	go to Step 4. 
	Otherwise, 
	let $F := \Fi{i}$ and 
	go to Step 5. 
\item[Step 4.]
	Denote the family of strong components $X$ in $(\Vi{i},\Fi{i})$ satisfying \eqref{EQdependM} by $\X \subseteq 2\sp{\Vi{i}}$. 
	Execute the following updates to construct $\Di{i+1}=(\Vi{i+1}, \Ai{i+1})$, 
	$\bi{i+1}\in  \ZZ_{++}\sp{\Vi{i+1}}$, 
	and $\wi{i+1}\in\RR_+\sp{\Ai{i+1}}$. 
	\begin{itemize}
	\item
		For each $X \in \X$, 
		execute the following updates. 
		First, 
		contract $X$ to obtain a new vertex $v_X$. 
		Then, 
		for every arc $a=(z,y) \in \Ai{i}$ with $z \in \Vi{i} \setminus X$ and $y \in X$, 
		\begin{align*}
		&{}z' := 	\begin{cases}
					v_{X'} & (\mbox{$z \in X'$ for some $X' \in \X$}), \\
					z      & (\mbox{otherwise}),
					\end{cases}
					\\
		&{}a' := (z',v_X), \\
		&{}\Psi(a') := a, \\
		&{}\wi{i+1}(a') := \wi{i}(a) - \wi{i}(\alpha(a,\Fi{i})) + \wi{i}(a_X), 
		\end{align*}
		where 
		$\alpha(a,\Fi{i})$ is an arc in 
		the fundamental circuit of $a$ with respect to $\Fi{i}$ in $\mathbf{M}_y^{(i)}$
		minimizing $\wi{i}$, 
		and 
		$a_X$ is an arc in $\Fi{i}[X]$ minimizing $\wi{i}$. 
	\item
		Define $\bi{i+1} \in  \ZZ_{++}\sp{\Vi{i+1}}$ by
		\begin{align*}
		\bi{i+1}(v) := 	\begin{cases}
							1    &(\mbox{$v = v_X$ for some $X \in \X$}),\\
							\bi{i}(v) &(\mbox{otherwise}). 
						\end{cases}
		\end{align*}
	\end{itemize}
	Let $i := i+1$ and go back to Step 2. 

\item[Step 5.]
	If $i=0$, then return $F$. 
\item[Step 6.]
	For every strong component $X$ in $(\Vi{i-1}, \Fi{i-1})$ with \eqref{EQdependM}, 
	apply the following update: 
	if there exists $a' = (z,v_X) \in F$, 
	then
	\begin{align*}
	F:= ((F \setminus \{a'\}) \cup \{\Psi(a')\}) \cup (\Fi{i-1}[X] \setminus \{\alpha(\Psi(a'),X')\}); 
	\end{align*}
	otherwise, 
	\begin{align*}
	F:= F \cup (\Fi{i-1}[X] \setminus \{a_X\}). 
	\end{align*}
	Let $i:= i-1$ and go back to Step 5. 
\end{description}

\section{Concluding remarks}
\label{SECconcl}

In this paper, 
we have proposed $b$-branchings, 
a generalization of branchings. 
In a $b$-branching, a vertex $v$ can have indegree at most $b(v)$, 
and thus $b$-branchings serve as a counterpart of $b$-matchings for matchings. 

It is somewhat surprising that, 
to the best of our knowledge, 
such a fundamental generalization of branchings has never 
appeared in the literature. 
The reason might be that, 
in order to obtain a reasonable generalization, 
it is far from being trivial how the other matroid (graphic matroid) in branchings is generalized. 
We have succeeded in obtaining a generalization 
inheriting the multi-phase greedy algorithm \cite{Boc71,CL65,Edm67,Ful74} 
and the packing theorem \cite{Edm73} for branchings 
by setting 
a sparsity matroid defined by \eqref{EQsparsity} as 
the other matroid.

An important property of the two matroids is 
Lemma \ref{LEMcircuit}, 
which says that 
an independent set of one matroid is decomposed into 
an independent set and 
some circuits in the other matroid. 
This plays an important role in the design of a multi-phase greedy algorithm: 
find an optimal independent $F$ set in one matroid; 
contract the circuits in $F$ with respect to the other matroid; 
and 
the optimal common independent set can be found recursively. 
We remark that the definitions \eqref{EQpartition} and \eqref{EQsparsity} are essential 
to attain this property. 
For example, 
the property fails if 
the vector $b$ is not identical in \eqref{EQpartition} and \eqref{EQsparsity}. 
It also fails if 
the sparsity matroid is defined by $|F[X]| \le b(X) - k$ for $k \neq 1$. 

Another remark is on
the similarity of our algorithm and the blossom algorithm for nonbipartite matchings \cite{Edm65}, 
where a factor-critical component can be contracted and expanded. 
In our $b$-branching algorithm, 
for each strong component $X \in \X$ and each $v^*\in X$, 
there exists an arc set $F_X \subseteq A[X]$ such that 
$d^-_{F_X} (v^*) = b(v^*)-1$ and $d^-_{F_X} (v^*) = b(v^*)$ for each $v \in X \setminus \{v^*\}$. 
In the blossom algorithm for nonbipatite matchings, 
for each factor-critical component $X$ and each vertex $v^* \in X$, 
there exists a matching exactly covering $X \setminus \{v^*\}$.

We finally remark that 
the problem of finding a maximum-weight $b$-branching is a special case of a 
modest generalization of the framework of the $\mathcal{U}$-feasible $t$-matching problem in bipartite graphs~\cite{Tak17ipco}. 
In \cite{Tak17ipco}, 
it is proved that 
the $\mathcal{U}$-feasible $t$-matching problem 
in bipartite graphs is efficiently tractable under certain assumptions on the family of excluded structures $\mathcal{U}$. 
The $b$-branching problem can be regarded as a new problem 
which falls in this tractable class of the (generalized) $\mathcal{U}$-feasible $t$-matching problem.

\section*{Acknowledgements}
This work is partially supported by 
JST ERATO Grant Number JPMJER1201,  
JST CREST Grant Number JPMJCR1402,  
JST PRESTO Grant Number JPMJPR14E1, 
JSPS KAKENHI Grant Numbers 
JP16K16012, 
JP17K00028, 
JP25280004, 
JP26280001, 
Japan.


\begin{thebibliography}{10}



\bibitem{BT81}
S.~Baum and L.E. {\protect Trotter, Jr.}: Integer rounding for polymatroid and
  branching optimization problems, {\em SIAM Journal on Algebraic and Discrete
  Methods}, 2 (1981), 416--425.

\bibitem{Ben17}
Y.~Benchetrit: Integer round-up property for the chromatic number of some
  $h$-perfect graphs, {\em Mathematical Programming}, 164 (2017), 245--262.

\bibitem{BK16}
A. Bern{\'{a}}th and T. Kir{\'{a}}ly: 
Blocking optimal $k$-arborescences, 
in \emph{Proceedings of the 27th {ACM-SIAM} Symposium on Discrete
               Algorithms}, 2016, 1682--1694. 

\bibitem{Boc71}
F.~Bock: An algorithm to construct a minimum directed spanning tree in a
  directed network, in {\em Developments in Operations Research}, Gordon and
  Breach, 1971,  29--44.

\bibitem{BCG86}
C.~Brezovec, G.~Cornu\'{e}jols and F.~Glover: Two algorithms for weighted matroid intersection,
{\em Mathematical Programming}, 36 (1986), 39--53.

\bibitem{CL65}
Y.J. Chu and T.H. Liu: On the shortest arborescence of a directed graph, {\em
  Scientia Sinica}, 14 (1965), 1396--1400.

\bibitem{DM76}
J.~Davies and C.~McDiarmid: Disjoint common transversals and exchange
  structures, {\em The Journal of the London Mathematical Society}, 14 (1976),
  55--62.

\bibitem{DFZ11}
R.~Dougherty, C.~Freiling and K.~Zeger: Network coding and matroid theory,
{\em Proceedings of the IEEE}, 99 (2011), 388--405.

\bibitem{DNS13}
O. Durand de Gevigney, V.-H. Nguyen and Z. Szigeti: 
Matroid-based packing of arborescences, 
{\em SIAM Journal on Discrete Mathematics}, 
27 (2013), 567--574. 

\bibitem{Edm65}
J.~Edmonds: Paths, trees, and flowers, {\em Canadian Journal of Mathematics},
  17 (1965), 449--467.

\bibitem{Edm67}
J.~Edmonds: Optimum branchings, {\em Journal of Research National Bureau of
  Standards{\em ,} Section~B}, 71 (1967), 233--240.

\bibitem{Edm70}
J.~Edmonds: Submodular functions, matroids, and certain polyhedra, in R.~Guy,
  H.~Hanani, N.~Sauer and J.~Sch\"{o}nheim, eds., {\em Combinatorial Structures
  and Their Applications}, New York, 1970, Gordon and Breach,  69--87.

\bibitem{Edm73}
J.~Edmonds: Edge-disjoint branchings, in R.~Rustin, ed., {\em Combinatorial
  Algorithms}, Algorithmics Press, 1973,  285--301.

\bibitem{Edm79}
J.~Edmonds: Matroid intersection, {\em Annals of Discrete Mathematics}, 4
  (1979), 39--49.

\bibitem{EG77}
J.~Edmonds and R.~Giles: 
A min-max relation for submodular functions 
on graphs, \emph{Annals of Discrete Mathematics}, 1 (1977), 185--204. 

\bibitem{FIM02}
L.~Fleischer, S.~Iwata and S.T.~McCormick:
A faster capacity scaling algorithm for minimum cost submodular flow,  
\emph{Mathematical Programming}, 92 (2002), 119--139.

\bibitem{Fra79}
A.~Frank: Covering branchings, {\em Acta Scientiarum Mathematicarum [Szeged]},
  41 (1979), 77--81.



\bibitem{Fra81}
A.~Frank: A weighted matroid intersection algorithm, {\em Journal of
  Algorithms}, 2 (1981), 328--336.


\bibitem{Fra09}
A.~Frank: 
Rooted $k$-connections in digraphs, 
\emph{Discrete Applied Mathematics}, 
157 (2009), 1242--1254. 

\bibitem{Fra11}
A.~Frank: {\em Connections in Combinatorial Optimization}, Oxford University
  Press, New York, 2011.

\bibitem{FT87}
A.~Frank and {\' E}. Tardos: 
An application of simultaneous Diophantine approximation in combinatorial optimization, \emph{Combinatorica},  7 (1987), 49--65. 

\bibitem{Fuj10}
S. Fujishige: 
A note on disjoint arborescences,
{\em Combinatorica},
30 (2010),
247--252.

\bibitem{Ful74}
D.R. Fulkerson: Packing rooted directed cuts in a weighted directed graph, {\em
  Mathematical Programming}, 6 (1974), 1--13.

\bibitem{Gil75}
F.R. Giles: {\em Submodular Functions, Graphs and Integer Polyhedra}, Ph.D.
  thesis, University of Waterloo, 1975.

\bibitem{HKM05}
N.J.A. Harvey, D.R. Karger and K. Murota:
Deterministic network coding by matrix completion,  in {\em Proceedings of the 16th ACM-SIAM
  Symposium on Discrete Algorithms}, 2005, 489--498.

\bibitem{HKK16}
{C.-C. Huang}, N.~Kakimura and N.~Kamiyama: Exact and approximation algorithms
  for weighted matroid intersection, in {\em Proceedings of the 27th ACM-SIAM
  Symposium on Discrete Algorithms}, 2016,  430--444.

\bibitem{IT76}
M.~Iri and N.~Tomizawa: An algorithm for finding an optimal ``independent assignment'', {\em Journal of the Operations Research Society of Japan},
  19 (1976), 32--57.

\bibitem{IFF01}
S. Iwata, L. Fleischer and S. Fujishige:
A combinatorial strongly polynomial algorithm for minimizing submodular functions, 
\emph{Journal of the ACM}, 48 (2001), 761--777. 

\bibitem{IMS00}
S.~Iwata, S.T.~McCormick and M.~Shigeno:  
A fast cost scaling algorithm for submodular flow, \emph{Information Processing Letters}, 74 (2000), 123--128.

\bibitem{IMS03}
S.~Iwata, S.T.~McCormick and M.~Shigeno:
Fast cycle canceling algorithms for minimum cost submodular flow, 
\emph{Combinatorica}, 23 (2003), 503--525.



\bibitem{Kam14}
N.~Kamiyama: Arborescence problems in directed graphs: Theorems and algorithms,
  {\em Interdisciplinary Information Sciences}, 20 (2014), 51--70.

\bibitem{KKT09}
N. Kamiyama, N. Katoh, and A. Takizawa: 
Arc-disjoint in-trees in directed graphs, 
{\em Combinatorica}, 29 (2009), 197--214. 

\bibitem{Kir16}
C.~Kir{\'{a}}ly: 
On maximal independent arborescence packing,
{\em SIAM Journal on Discrete Mathematics},
30 (2016), 2107--2114. 

\bibitem{KT05}
J.~Kleinberg and {\' E}.~Tardos: {\em Algorithm Design}, Addison Wesley,
  Boston, 2005.

\bibitem{KV12}
B.~Korte and J.~Vygen: {\em Combinatorial Optimization---Theory and
  Algorithms}, Springer, Berlin, 5th edition, 2012.

\bibitem{LSW2015}
Y.~Lee, A.~Sidford, and S.C.-W. Wong: A faster cutting plane method and its implications for combinatorial
  and convex optimization, in V. Guruswami, ed., {\em Proceedings of the 56th Annual Symposium on Foundations of
  Computer Science, FOCS}, {IEEE} Computer Society, 2015, 1049--1065.

\bibitem{Lov76}
L.~Lov\'{a}sz: On two minimax theorems in graph, {\em Journal of Combinatorial
  Theory, Series B}, 21 (1976), 96--103.

\bibitem{L70}
E.L. Lawler: Optimal matroid intersections, in R.~Guy, H.~Hanani, N.~Sauer and J.~Sch{\"o}nheim, eds., {\em Combinatorial Structures and Their Applications}, New York, 1970, Gordon and Breach, 233--234.

\bibitem{L75}
E.L. Lawler: Matroid intersection algorithms, {\em Mathematical Programming}, 9 (1975), 31--56.

\bibitem{MG86}
S.E. Markosyan and G.S. Gasparyan: Optimal'noe razlozhenie orientirovannykh
  mul'tigrafov na orlesa, {\em Metody Diskretnogo Analiza v Teorii Grafov i
  Logicheskikh Funkts\u{\i}}, 43 (1986), 75--86.

\bibitem{McD76}
C.~McDiarmid: On pairs of strongly-base-orderable matroids, Technical report,
  No. 283, School of Operations Research and Industrial Engineering, College of
  Engineering, Cornell University, 1976.

\bibitem{M00}
K.~Murota: {\em Matrices and Matroids for Systems Analysis}, Springer, Berlin, 2nd edition, 2000.

\bibitem{R89}
A.~Recski: {\em Matroid Theory and Its Applications in Electric Network Theory and in Statics}, Springer, Berlin, 1989.

\bibitem{Sch84}
A.~Schrijver: Totally dual integral system from directed graphs, 
crossing families and sub- and supermodular functions, in W.R. Pulleyblank, ed., 
{\em Progress in Combinatorial Optmization}, Toronto, 1984, Academic Press, 315--361.


\bibitem{Sch00}
A. Schrijver:
A combinatorial algorithm minimizing submodular functions in strongly polynomial time, {\em Journal of Combinatorial
  Theory, Series B},  80 (2000), 346--355. 
  
\bibitem{Sch03}
A.~Schrijver: {\em Combinatorial Optimization---Polyhedra and Efficiency},
  Springer, Heidelberg, 2003.

\bibitem{Tak17ipco}
K.~Takazawa: Excluded $t$-factors: A unified framework for nonbipartite
  matchings and restricted 2-matchings, in F.~Eisenbrand and J.~Koenemann,
  eds., {\em Integer Programming and Combinatorial Optimization: Proceedings of
  the 19th IPCO, LNCS 10328}, Springer, 2017,  430--441.

\end{thebibliography}
\end{document}